\documentclass[conference,letterpaper]{IEEEtran}
\usepackage[cmex10]{amsmath} 
\usepackage{graphicx} 
\usepackage{amsthm}
\usepackage{amssymb}
\usepackage{amsfonts}
\usepackage{amsmath}
\usepackage{amsthm}
\usepackage{amssymb}
\usepackage{dsfont}

\newtheorem{definition}{Definition}
\newtheorem{lemma}{Lemma}
\newtheorem{theorem}{Theorem}

\newtheorem{remark}{Remark}
\newtheorem{assumption}{Assumption}

\newcommand{\tti}{\rightarrow\infty}
\newcommand{\graph}{\mathcal{G}}

\newcommand{\prob}[1]{\mathbb{P}\left(#1\right)}

\newcommand{\plim}[0]{p\mbox{-}\lim}
\newcommand{\plimsup}[0]{p\mbox{-}\limsup}
\newcommand{\pliminf}[0]{p\mbox{-}\liminf}
\newcommand{\Var}{\text{Var}}

\begin{document}
\title{Entropy and Distributed Source Coding of Connected Soft Random Geometric Graphs} 


\author{%
  \IEEEauthorblockN{Oliver Baker and Carl P. Dettmann}
  \IEEEauthorblockA{School of Mathematics \\
                    University of Bristol\\
                    Bristol, United Kingdom\\
                    Email: \{o.baker, carl.dettmann\}@bristol.ac.uk}
}

\maketitle

\begin{abstract}
    We consider the distributed compression of Soft Random Geometric Graphs (SRGGs) above the connectivity threshold. We establish the Slepian-Wolf rate region for the SRGG in the setting where there are a finite number of encoders compressing sections of the graph independently. To do so, we prove novel limit theorems and asymptotic equipartition properties for the SRGG and its entropy, which allow us to use random binning techniques for distributed compression.
\end{abstract}

\section{Introduction}

In the modern world, devices are more connected than ever. This results in networks becoming larger, increasing the need for research on compression of networks. In particular, in areas of application such as wireless networking, random models of large networks with a spatial embedding have gained popularity \cite{haenggi_2012}. One popular model for wireless networks is the \textit{Soft Random Geometric Graph} (SRGG). In this model, nodes are connected at random according to a function of their spatial distance \cite{penrose2016connectivity}. 

Despite their numerous applications, the information theory of these random graphs is not well understood. In recent years there have been a growing number of works that have concentrated on this model or similar ones such as graphons. In \cite{janson2010graphons} a limit theorem for the entropy of $W$ random graphs (which include dense SRGGs as a special case) was proven. Since then, there have been a number of works attempting to bound or quantify the entropy, such as \cite{coon2018entropy, baker2026entropy, coon2018conditional, vippathalla2026entropy}. Also, the compression of random graphs has gained popularity recently. Models such as sparse Erd\H os-R\'enyi graphs \cite{choi_structural_entropy}, stochastic block models \cite{Abbe2016GraphClusters, martin_rate_distortion_sbm, bhatt2023universal} and preferential attachment graphs \cite{luczak_pag} have all been considered. Compression of the SRGG (or generalisations) has been studied in \cite{delgosha_universal_compression, delgosha2022universal, vippathalla2024lossy}. In works such as \cite{choi_structural_entropy, mihai2021structural}, the goal is to compress the \textit{structure} of the graph, whereas our aim is to compress the \textit{labelled} graph. Reference \cite{Paton2022labelled_entropy} gives a detailed treatment of the differences between these regimes. 

Crucially, all of these works focus on compression in the centralised setting, where the encoder sees the entire graph. However, in many applications, it is perhaps more realistic to assume that nodes, or blocks of nodes only have access to partial information about the network. In \cite{delgosha2021distributed}, distributed compression of Erd\H os-R\'enyi and configuration model graphs is studied, but distributed compression of the SRGG is as yet unstudied. This regime is of interest in SRGGs because the spatial nature of the graph can induce natural limits on how much of the graph can be observed. We consider the regime where we have a fixed, arbitrary number of encoders which can access different sections of the graph. Beyond graph compression, there are examples of situations where a central controller needs (at least partial) topological knowledge from distributed network locations. For example, sink nodes or central controllers in wireless sensor networks that make routing, scheduling, interference management or topology control decisions \cite{tunca2013distributed}.

Our other source of novelty is the specific regime in which we study the SRGG. In the above works, when the SRGG (or $W$ random graph) is made sparse (the expected degree of a node is sublinear in the total number of nodes), a fraction of the edges of a dense ensemble are removed randomly. However, this places less emphasis on the geometry, since the length scale of edges does not scale with the number of nodes. In our work, we study the more common regime, for example in the work of Penrose \cite{penrose2016connectivity}, where the `connection range' of each node is decreasing in the number of nodes. Equivalently, the size of the domain in which we embed the graph is growing as we add more nodes. We make this idea more precise later. To the best of our knowledge, this regime has not yet been studied in an information-theoretic context.

We also note that the SRGG is an example of a \textit{non-standard source}, in that it cannot be written as a string of stationary, ergodic or Markovian symbols. We must therefore employ the ideas of information spectrum theory to study its information theoretic properties. In this work, we contribute the following. \begin{enumerate}
    \item A limit theorem for the entropy rate of the SRGG,
    \item an Asymptotic Equipartition Property (AEP) for the SRGG,
    \item a complete characterisation of the Slepian-Wolf region for the distributed compression of the SRGG.
\end{enumerate} In Section \ref{sec:prelims} we formally introduce the model, the information spectrum, and the distributed compression setting. Section \ref{sec:results} presents our main results, and Section \ref{sec:proofs} contains proof ideas. The full proofs, and plots of the rate region may be found in the extended version on arXiv \cite{baker2026entropydistributedsourcecoding}. Finally, Section \ref{sec:conclusion} concludes.

\section{Model and Problem Formulation}
\label{sec:prelims}

\subsection{Soft Random Geometric Graphs}

\begin{definition}[Soft Random Geometric Graph]
    Let $Z_1,...,Z_n$ be independent, uniformly distributed points in a compact set $\mathcal{K} \subset \mathbb{R}^d$ with unit volume, a non-empty interior, whose boundary is piecewise smooth, of finite length and has a finite number of corners. The Soft Random Geometric Graph (SRGG) $G_n = (V_n, E_n)$ is the graph with vertex set $V_n = [n] := \{1,2,...,n\}$, and edge set formed by connecting each pair of vertices $i, j \in V_n$ independently with probability $p_n(R_{ij})$, where $R_{ij} := \|Z_i-Z_j\|$ is the Euclidean distance between nodes $i$ and $j$, and $p_n : [0,\infty) \to [0,1]$ is the `connection function'. Let $\graph_n(\mathcal{K})$ be the set of all SRGGs on $n$ nodes embedded in $\mathcal{K}$, then $\graph_n(\mathcal{K})$ is called the ensemble of SRGGs. We will denote this by $\graph_n$ in the remainder of this paper for ease of notation.
\end{definition}
In this work, we will consider specifically the following form of connection function. Let $\{s(n)\}_{n=1}^\infty$ be a positive, decreasing sequence of real numbers. We call $s(n)$ the \textit{sparsity} of the SRGG. Then, we focus on connection functions $p_n(r) = p(r/s(n))$, where $p : [0,\infty) \to [0,1]$. The connection function $p$ is usually non-increasing, with a common example being the Rayleigh fading connection function $p(r/s(n)) = \exp(-(r/s(n))^\eta)$, where $\eta > 0$ controls the decay of connectivity. This can be thought of as decreasing the `connection range' of each node as $n\tti$, that is, the length scale at which two nodes are likely to be connected is decreasing with $n$. We note that this is a different regime to the other commonly considered SRGG, where $p_n(r) = s(n)p(r)$. We will require that $1 \gg s(n) \gg \left(\frac{\log n}{n}\right)^{1/d}$. This requirement has two interpretations. The first is that this is the regime where the SRGG is connected with high probability as $n\tti$ \cite{penrose2016connectivity}. The second is that the entropy of the labelled graph is asymptotically equivalent to the entropy of the unlabelled graph.

\begin{definition}[Entropy of the SRGG]
    The Shannon Entropy of the SRGG is denoted by $H(\graph_n)$, and is given by
    \begin{equation}
        H(\graph_n) := - \sum_{G \in \graph_n} \prob{G_n}\log \prob{G_n}
    \end{equation}
    where the logarithm is to base 2, and we take the convention that $0 \log 0 = 0$. Also, if $Z_n$ is the random vector containing the locations of each node, then the conditional entropy $H(\graph_n|\mathcal{Z}_n) = \mathbb{E}[H(\graph_n | Z_n)]$ is given by
    \begin{equation}
        H(\graph_n|\mathcal{Z}_n) = \binom{n}{2}\int_0^\infty f_{\mathcal{K}}(r)h_2(p_n(r))dr
    \end{equation}
    where $f_{\mathcal{K}}$ is the probability density of pairwise distances in $\mathcal{K}$, and $h_2(p) = -p\log p-(1-p)\log(1-p)$ is the binary entropy function.
\end{definition}

We treat the following as a standing assumption for the remainder of the paper.

\begin{assumption}
    \label{ass:1}
    The connection function $p_n$ is given by $p_n(r) = p(r/s(n))$, where $p$ is a H\"older continuous function, that is, for every $(x,y),(x',y') \in \mathcal{K}^2$,
    \begin{equation}
        |p(x,y) - p(x',y')| \leq L\|(x,y)-(x',y')\|^\alpha
    \end{equation}
    with $L, \alpha > 0$ as constants. We assume that, if $d$ is the dimension of the space $\mathcal{K} \subset \mathbb{R}^d$,
        \begin{equation}
            \label{ass:var_1}
            \int_0^\infty r^{2d}h_2(p(r)) dr < \infty
        \end{equation}
        and
        \begin{equation}
            \label{ass:var_2}
            \int_0^\infty r^{d-1}p(r)(1-p(r))\log^2\left(\frac{p(r)}{1-p(r)}\right)dr <\infty
        \end{equation}
    We also define the following integral
    \begin{equation}
        \label{ass:h_star}
        h^* := \kappa_{d-1}\int_0^\infty r^{d-1}h_2(p(r))dr < \infty
    \end{equation}
    where $\kappa_{d-1}$ is the volume of the unit $d-1$ ball. 
\end{assumption}

\subsection{The Information Spectrum}

As briefly mentioned in the introduction, the SRGG is what is known as a `non-standard source'. In classical Information Theory, sources are considered to consist of a sequence of i.i.d. symbols, or at the very least, to be stationary and ergodic. However, the SRGG has neither of these properties. Instead, we must consider the graph $G_n$ in its entirety. \textit{Information spectrum theory} was developed to deal with general sources such as the SRGG. The theory is developed around \textit{spectral entropy rates}. We first define the \textit{limit superior in probability}, and the \textit{limit inferior in probability}. Let $(X_1,X_2,...)$ be a real-valued sequence of random variables, then
\begin{equation}
    \plimsup_{n\tti} X_n := \inf \left\{\alpha | \lim_{n\tti}\prob{X_n > \alpha} =0\right\}
\end{equation}
\begin{equation}
    \pliminf_{n\tti} X_n := \sup \left\{\beta | \lim_{n\tti}\prob{X_n < \beta} =0\right\}
\end{equation}

If $\pliminf_{n\tti} X_n = \plimsup_{n\tti} X_n = X^*$, we say that $\plim_{n\tti} X_n = X^*$, which is equivalent to $X_n \to X^*$ in probability as $n\tti$ \cite[\S 1.3]{han_book}.

\begin{definition}[Spectral Entropy Rates]
    Let $\mathbf{G} = \{G_n\}_{n=2}^\infty$ denote a general source (see \cite[\S 1.3]{han_book}), where $G_n$ is a $\graph_n$-valued random variable. We define the spectral sup-entropy rate of $\mathbf{G}$ as
    \begin{equation}
        \overline{H}(\mathbf{G}) = \plimsup_{n\tti} \frac{1}{\binom{n}{2}s(n)^d}\log \frac{1}{\prob{G_n}}
    \end{equation}
    Similarly, the spectral inf-entropy rate is defined as
    \begin{equation}
        \underline{H}(\mathbf{G}) = \pliminf_{n\tti} \frac{1}{\binom{n}{2}s(n)^d}\log \frac{1}{\prob{G_n}}
    \end{equation}
    If $\overline{H}(\mathbf{G}) = \underline{H}(\mathbf{G})$, then the limit in probability exists, and the spectral entropy rate
    \begin{equation}
        H(\mathbf{G}) = \plim_{n\tti} \frac{1}{\binom{n}{2}s(n)^d}\log \frac{1}{\mathbb{P}(G_n)}
    \end{equation}
    is well-defined.
\end{definition}

These quantities are similar to the spectral entropy rates used in information spectrum theory for general sources \cite{han_book}. The difference is in the normalisation. Normalising by the blocklength $\binom{n}{2}$, would give $\overline{H}(\mathbf{G})=0$. This is not a very interesting result, since the sparsity induced by $s(n)$ means the graph has a vanishing density, and so the majority of edges are absent in the $n\to\infty$ limit. Our definition may be thought of as an `effective' spectral entropy rate, capturing the leading order behaviour of compression length, rather than simply dividing through by the number of symbols in the source. It is simple to check that these quantities retain their operational meaning after replacing the standard scaling of $1/n$ by $1/(\binom{n}{2}s(n)^d)$. \\

\vspace{-0.2cm}
\section{Main Results}
\label{sec:results}

We first establish the following limit theorem.

\begin{theorem}[Limit Theorem]
    \label{thm:limit}
    Let $\graph_n$ be the ensemble of SRGGs with connection function $p_n(r) = p(r/s(n))$. Then,
    \begin{equation}
        \lim_{n\tti} \frac{H(\graph_n)}{\binom{n}{2}s(n)^d} = h^*
    \end{equation}
\end{theorem}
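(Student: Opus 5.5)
We sandwich $H(\graph_n)$ between the conditional entropy $H(\graph_n|\mathcal{Z}_n)$ and that quantity plus a correction. The chain rule gives
\[
H(\graph_n|\mathcal{Z}_n) \le H(\graph_n) \le H(\graph_n|\mathcal{Z}_n) + I(G_n;Z_n).
\]
So it suffices to prove two things: first, $H(\graph_n|\mathcal{Z}_n)/(\binom{n}{2}s(n)^d)\to h^*$; second, $I(G_n;Z_n) = o(\binom{n}{2}s(n)^d)$. The lower bound is immediate from the fact that conditioning reduces entropy.

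For the first claim we use the explicit formula $H(\graph_n|\mathcal{Z}_n)=\binom{n}{2}\int_0^\infty f_{\mathcal{K}}(r)h_2(p(r/s(n)))dr$ and substitute $r = s(n)u$. This gives
\[
\frac{H(\graph_n|\mathcal{Z}_n)}{\binom n2 s(n)^d} = \int_0^\infty \frac{f_{\mathcal{K}}(s(n)u)}{s(n)^{d-1}}\, h_2(p(u))\,du .
\]
For small $r$, the pairwise distance density behaves like $f_{\mathcal{K}}(r)= d\,\omega_d r^{d-1}(1+O(r))$, because $\mathcal{K}$ has unit volume and a piecewise smooth boundary of finite size. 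The boundary correction is the volume of points within distance $r$ of $\partial\mathcal{K}$, which is $O(r)$ relative to the bulk term. Hence the integrand converges pointwise to $c_d u^{d-1}h_2(p(u))$, with the constant matching $\kappa_{d-1}$ in the paper's convention for $h^*$.

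To pass the limit through the integral we use dominated convergence. We need a uniform bound $f_{\mathcal{K}}(r)\le C r^{d-1}$ for all $r$. This holds because $f_{\mathcal{K}}(r)$ is at most the surface area of the sphere of radius $r$ intersected with $\mathcal{K}-\mathcal{K}$, integrated over $\mathcal{K}$. Since $\mathcal{K}$ is compact, $f_{\mathcal{K}}$ vanishes beyond $\mathrm{diam}(\mathcal{K})/s(n)$ after rescaling, so integrability of $u^{d-1}h_2(p(u))$, which is exactly $h^*<\infty$, provides the dominating function.

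The main obstacle is the mutual information term. We bound $I(G_n;Z_n)\le I(G_n;\hat Z_n)+H(G_n|\hat Z_n)-H(G_n|Z_n)$, where $\hat Z_n$ is a quantisation of the positions onto a grid of mesh $\delta_n = \epsilon s(n)$. Then $I(G_n;\hat Z_n)\le H(\hat Z_n)\le n d\log(1/\delta_n) = O(n\log n)$. This is $o(n^2 s(n)^d)$ precisely because $s(n)^d\gg \log n/n$, which is where the connectivity regime enters.

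The remaining term, $H(G_n|\hat Z_n)-H(G_n|Z_n)$, is the cost of knowing positions only to resolution $\delta_n$. Given $\hat Z_n$, the edges are independent, each Bernoulli with probability $\bar p_{ij}$ equal to the average of $p(R_{ij}/s(n))$ over the grid cells. The term is therefore a sum over pairs of Jensen gaps $\mathbb{E}[h_2(\bar p_{ij})-h_2(p_{ij})]$. By Hölder continuity, $|p_{ij}-\bar p_{ij}|\le L(2\sqrt d\,\epsilon)^\alpha$. Together with the uniform continuity of $h_2$, this makes each gap small.

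Summing over pairs then needs localisation. We split pairs into those with $R_{ij}\le M s(n)$ and those beyond. The near pairs number $O(n^2 s(n)^d M^d)$ in expectation, and each contributes $o_\epsilon(1)$. For the far pairs we bound each gap by $h_2(\bar p_{ij})$ and use the tail of $\int r^{d-1}h_2(p)$, possibly with the moment condition \eqref{ass:var_1}, to make their contribution at most $\eta(M)\binom n2 s(n)^d$ with $\eta(M)\to0$. Letting $n\to\infty$, then $\epsilon\to0$, then $M\to\infty$ gives $\limsup H(\graph_n)/(\binom n2 s(n)^d)\le h^*$, which completes the proof.

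The delicate point is controlling $h_2(\bar p)$ in the tail when $p$ is small. Since $h_2$ is concave, we need that averaging $p$ over a window of width $\epsilon$ does not inflate the entropy. If a monotonicity or tail-regularity property of $p$ is unavailable, we would first try to handle this using Hölder continuity combined with the integrability assumption.
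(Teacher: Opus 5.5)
Your overall architecture matches the paper's. You lower-bound by conditioning, upper-bound by $H(\graph_n|\hat Z_n)+H(\hat Z_n)$ with an $O(n\log n)$ quantisation cost killed by $s(n)^d\gg\log n/n$, and control the averaging error by H\"older continuity. Your dominated-convergence proof of $H(\graph_n|\mathcal{Z}_n)/(\binom n2 s(n)^d)\to h^*$ is correct, using the uniform bound $f_{\mathcal{K}}(r)\le d\omega_d r^{d-1}$ and the pointwise limit. It is more self-contained than the paper, which imports that limit from earlier work. One small correction: edges are not conditionally independent given $\hat Z_n$, since $X_{ij}$ and $X_{ik}$ both depend on where $Z_i$ sits inside its cell. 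Subadditivity still gives $H(\graph_n|\hat Z_n)\le\sum_{i<j}\mathbb{E}[h_2(\bar p_{ij})]$, so the sum of Jensen gaps remains an upper bound, which is all you need.

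The genuine gap is the far-pair step, exactly where you flagged it. With mesh $\epsilon s(n)$ and $\epsilon$ fixed, H\"older continuity only bounds each pair's gap by the constant $h_2(L(2\sqrt d\,\epsilon)^\alpha)$, which does not vanish with $n$. So you are forced to localise, and then you need $\mathbb{E}[h_2(\bar p_{12})\mathbf{1}\{R_{12}>Ms(n)\}]\le\eta(M)s(n)^d$. Nothing in your sketch delivers this. Concavity gives $h_2(\bar p_{12})\ge\mathbb{E}[h_2(p_{12})\mid\hat Z_n]$, so the tail of $h^*$ does not control it. The H\"older route $h_2(\bar p_{12})\le h_2(p_{12})+h_2(L(2\sqrt d\,\epsilon)^\alpha)$ leaves an additive constant which, over far pairs (probability tending to $1$), contributes order $1$, not order $s(n)^d$. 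If $p$ were non-increasing you could close it via $\bar p_{12}\le p(R_{12}/s(n)-2\sqrt d\,\epsilon)$, but monotonicity is not among the stated assumptions.

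The paper avoids localisation altogether. Because the quantisation cost grows only logarithmically in the number of cells, it takes the mesh polynomially finer than $s(n)$. In your notation this is $\delta_n=s(n)^{1+\beta}$ with $\alpha\beta>d$, essentially the paper's requirement $m\,s(n)^{(d+\alpha)/\alpha}\gg1$ on the number $m$ of boxes per side. Then $|h_2(x)-h_2(y)|\le h_2(|x-y|)$ and H\"older continuity bound every pair's gap uniformly by $h_2\bigl(L(2\sqrt d\,s(n)^\beta)^\alpha\bigr)=O\bigl(s(n)^{\alpha\beta}\log(1/s(n))\bigr)=o(s(n)^d)$. The total over all $\binom n2$ pairs is therefore $o(\binom n2 s(n)^d)$. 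Meanwhile $H(\hat Z_n)\le nd(1+\beta)\log(1/s(n))+O(n)=O(n\log n)$ is still $o(n^2s(n)^d)$. With that single change of mesh your argument closes, with no near/far split and no tail regularity of $p$.
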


The next result strengthens the above to an AEP, showing that the spectral sup-entropy rate is equal to the spectral inf-entropy rate.

\begin{theorem}[Asymptotic Equipartition Property]
    \label{thm:aep}
    If $G_n$ is a random SRGG sampled from $\graph_n$,
    \begin{equation}
        \plim_{n\tti} \frac{1}{\binom{n}{2}s(n)^d} \log \frac{1}{\prob{G_n}} = h^*
    \end{equation} 
    In other words,
    \begin{equation}
        \overline{H}(\mathbf{G}) = \underline{H}(\mathbf{G}) = H(\mathbf{G}) = h^*
    \end{equation}
    This implies the existence of sets defined for all $\epsilon > 0$,
    \begin{equation}
        \mathcal{T}_n^\epsilon := \left\{G_n : \left|\log \frac{1}{\prob{G_n}} - H(\graph_n)\right| \leq \binom{n}{2}s(n)^d \epsilon\right\} 
    \end{equation}
    such that
    \begin{enumerate}
        \item $\prob{G_n \in \mathcal{T}_n^\epsilon} \to 1$ as $n\tti$
        \item For each $G_n \in \mathcal{T}_n^\epsilon$, $2^{-\binom{n}{2}(s(n)^dh^* +\epsilon)} \leq \prob{G_n} \leq 2^{-\binom{n}{2}s(n)^d(h^*- \epsilon)}$
        \item For every $\delta > 0$, there exists an $n^*$ so that for all $n > n^*$, we have
        \begin{equation}
            (1-\delta)2^{\binom{n}{2}s(n)^d(h^*-\epsilon)} \leq |\mathcal{T}_n^\epsilon| \leq 2^{\binom{n}{2}s(n)^d(h^*+\epsilon)}
        \end{equation}
    \end{enumerate}
\end{theorem}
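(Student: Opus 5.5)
The plan is to sandwich $-\log \mathbb{P}(G_n)$ between two quantities that both concentrate around $\binom{n}{2}s(n)^d h^*$. Write $N=\binom{n}{2}s(n)^d$ and $Z_n=(Z_1,\dots,Z_n)$. Since $\mathbb{P}(G_n)=\mathbb{E}_{Z}[\mathbb{P}(G_n\mid Z)]$, we have $\mathbb{P}(G_n)\ge \mathbb{P}(G_n\mid Z_n)\cdot \mathbb{P}(\text{``typical'' } Z)$-type bounds in one direction and trivial bounds in the other. A cleaner route is to use the identity
\[
\log\frac{1}{\mathbb{P}(G_n)} = \log\frac{1}{\mathbb{P}(G_n\mid Z_n)} - \log\frac{p(Z_n\mid G_n)}{p(Z_n)},
\]
and to control the two terms separately. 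Here $p(\cdot)$ denotes densities, and the second term is the pointwise mutual information $i(G_n;Z_n)$.

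\textbf{Step 1: concentration of the conditional term.} Conditionally on $Z_n$, $\log(1/\mathbb{P}(G_n\mid Z_n))=\sum_{i<j}\xi_{ij}$, where the $\xi_{ij}$ are independent. Each has mean $h_2(p_n(R_{ij}))$ and variance $p_n(1-p_n)\log^2(p_n/(1-p_n))$ evaluated at $R_{ij}$. I will show $\frac1N\sum_{i<j}\xi_{ij}\to h^*$ in probability by a second-moment argument.
\begin{itemize}
\item[(a)] The conditional variance summed over pairs has expectation $\binom n2\int f_{\mathcal K}(r)\,p(1-p)\log^2(\cdot)(r/s(n))\,dr = O(N)$. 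This follows from assumption (\ref{ass:var_2}) after substituting $r=s(n)u$ and using $f_{\mathcal K}(r)\le C r^{d-1}$. Chebyshev then gives fluctuations of order $O(\sqrt N)=o(N)$.
\item[(b)] The conditional mean $\sum_{i<j}h_2(p_n(R_{ij}))$ is a U-statistic in the $Z_i$. Its mean is $H(\mathcal G_n\mid\mathcal Z_n)$, and $H(\mathcal G_n\mid\mathcal Z_n)/N\to h^*$ because $f_{\mathcal K}(s(n)u)/s(n)^{d-1}\to d\kappa_d u^{d-1}$ pointwise. Boundary effects vanish since $s(n)\to0$, and dominated convergence applies. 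I will reconcile the constant with the paper's $\kappa_{d-1}$ convention.
\item[(c)] Its variance is computed by the Hoeffding decomposition. Pairs sharing one vertex contribute $n^3\,\mathbb{E}[g(Z_1,Z_2)g(Z_1,Z_3)]=O(n^3s(n)^{2d})$, and disjoint pairs contribute zero. Terms with both indices equal contribute $n^2 O(s(n)^d)$, controlled via (\ref{ass:var_1}). Both are $o(N^2)$, since $N^2\sim n^4 s^{2d}$ and $n^2 s^d\to\infty$.
\end{itemize}

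\textbf{Step 2: the geometric correction is negligible.} I need $i(G_n;Z_n)=o_P(N)$. In expectation, $\mathbb{E}\,i = I(G_n;Z_n)\le H(Z_n\text{ quantised})$-type bounds. More directly, $I(G_n;Z_n)=H(\mathcal G_n)-H(\mathcal G_n\mid\mathcal Z_n)$, and Theorem~\ref{thm:limit} together with Step 1(b) gives $I/N\to0$. Moreover, $i(G_n;Z_n)$ is not sign-definite, but $\mathbb{P}(i<-aN)\le 2^{-aN}$ by the standard change-of-measure bound. Together with $\mathbb{E}[i]=o(N)$ and $i\ge -aN$ w.h.p., Markov's inequality applied to $i+aN$ gives $i=o_P(N)$. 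This step is where I expect the main obstacle, since the mutual information is not a sum of independent terms. The argument relies on Theorem~\ref{thm:limit}, which is exactly what encodes that positions cost only about $n\log n = o(N)$ bits; the latter holds because $n s(n)^d\gg\log n$.

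\textbf{Step 3: typical set properties.} Combining Steps 1 and 2 gives the p-limit $h^*$, hence $\overline H=\underline H=h^*$. Since $H(\mathcal G_n)/N\to h^*$ by Theorem~\ref{thm:limit}, the event $\mathcal T_n^\epsilon$ has probability tending to $1$. Property 2 follows from the definition of $\mathcal T_n^\epsilon$ after absorbing $|H(\mathcal G_n)-Nh^*|=o(N)$ into $\epsilon$, possibly by working with $\epsilon/2$. Property 3 is the standard counting argument: $1\ge \sum_{\mathcal T}\mathbb{P}\ge|\mathcal T|2^{-N(h^*+\epsilon)}$, and $1-\delta\le \mathbb{P}(\mathcal T)\le |\mathcal T|2^{-N(h^*-\epsilon)}$ for $n$ large.
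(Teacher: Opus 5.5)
Your proof is correct and has the same skeleton as the paper's. Both write $\log\frac{1}{\mathbb{P}(G_n)}$ as $\log\frac{1}{\mathbb{P}(G_n\mid Z_n)}$ plus the information density $\log r_n$, with $r_n=\mathbb{P}(G_n\mid Z_n)/\mathbb{P}(G_n)$. Both control the first term by second moments and the second through $I(G_n;Z_n)=H(\mathcal{G}_n)-H(\mathcal{G}_n\mid\mathcal{Z}_n)=o(N)$ from Theorem~1, where $N=\binom{n}{2}s(n)^d$. Your identity has a sign slip: the correction is $+\,i(G_n;Z_n)$, since $p(Z_n\mid G_n)/p(Z_n)=r_n$. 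This is harmless because you bound $i$ on both sides.

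The execution differs, mostly to your advantage:
\begin{itemize}
\item The paper applies Chebyshev to $\sum_{i<j}Y_{ij}$ citing ``the independence of each $Y_{ij}$''. That claim is false, since $Y_{12}$ and $Y_{13}$ share $Z_1$. Your split is the rigorous version: a conditional fluctuation, genuinely independent given $Z_n$, plus the U-statistic $\sum_{i<j}h_2(p_n(R_{ij}))$. Your Hoeffding-type count of $O(n^3s(n)^{2d})+O(n^2s(n)^d)=o(N^2)$ shows the paper's conclusion survives. For the diagonal terms, $0\le h_2\le 1$ suffices; the $r^{2d}$ moment assumption is not needed.
\item The paper applies Markov to $|{-\log\mathbb{P}(G_n)}-H(\mathcal{G}_n\mid\mathcal{Z}_n)|$. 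This forces it to upgrade the conditional AEP to $L^1$ via de la Vall\'ee Poussin, which combining in-probability statements, as you do, avoids.
\item Conversely, the paper's bound $\mathbb{E}|\log r_n|\le I(G_n;Z_n)+\frac{2}{e}\log e$ disposes of the information density in one line. Your Markov step needs a patch, because $i+aN$ is only nonnegative with high probability. Apply Markov to $(i+aN)^+$, use $\mathbb{E}[(i+aN)^-]\le\int_0^\infty 2^{-aN-t}\,dt$ from the same change-of-measure tail bound, and let $a\to0$ after $n\to\infty$.
\end{itemize}

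Your $\epsilon/2$ caveat for properties 2--3 is warranted. $\mathcal{T}_n^\epsilon$ is centred at $H(\mathcal{G}_n)$, but the stated bounds are centred at $\binom{n}{2}s(n)^dh^*$, a mismatch the paper's ``rearranging'' glosses over. Your dominated-convergence proof of $H(\mathcal{G}_n\mid\mathcal{Z}_n)/N\to h^*$ also replaces the paper's citation. It yields the constant $d\kappa_d$, the unit-sphere surface area, which is what the paper's $\kappa_{d-1}$ must denote.
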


We now introduce the distributed compression scheme for the SRGG. For each node $i \in [n]$, let $N_i$ be a $\{0,1\}^{(n-1)}$-valued random vector representing the neighbourhood of $i$ in $G_n$. That is, $N_i = (X_{ij})_{j \neq i}$, where $X_{ij} = 1$ if $i$ and $j$ share an edge, and is $0$ otherwise. In this paper we consider $X_{ij}$ to be the same random variable as $X_{ji}$. The distributed compression problem is formulated as follows. Let $L \in \mathbb{N}$ be a constant. We partition the graph $G_n$ into $L$ `blocks', given by
\begin{equation}
    B^{(n)}_{l} = \left\{N_i : (l-1)n/L < i \leq ln/L\right\}
\end{equation}
for $l = 1,...,L$. Denote by $\mathcal{B}_l^{(n)}$ the set of all realisations of $B^{(n)}_l$ for each $l$. The distributed compression problem asks whether there exists a sequence of encoders $\{(\phi_1^{(n)},...,\phi_L^{(n)})\}_{n=1}^\infty$, where $\phi_l^{(n)} : \mathcal{B}_l^{(n)} \to \mathcal{M}_l^{(n)}$, and a sequence of decoders $\{\psi^{(n)}\}_{n=1}^\infty$ with $\psi^{(n)}: \prod_{l=1}^n\mathcal{M}_l^{(n)} \to \graph_n$ such that
\begin{equation*}
    P_E^{(n)} := \prob{\psi^{(n)}(\phi_1^{(n)}(B^{(n)}_1),...,\phi_L^{(n)}(B^{(n)}_L)) \neq G_n} \to 0
\end{equation*}
as $n\tti$. We say an $L$-tuple of rates $(R_1,...,R_L) \in [0,\infty)^{L}$ is \textit{achievable} if 
\begin{equation*}
    \limsup_{n\tti} \frac{L}{n(n-1)s(n)^d} \log |\mathcal{M}_l^{(n)}| \leq R_l
\end{equation*}
for each $l=1,...,n$, and $P_E^{(n)} \to 0$ as $n\tti$. The above normalisation can be understood as $L/n$ for the size of the block, and $1/((n-1)s(n)^d)$ to normalise the per-neighbourhood entropy (consisting of $n-1$ symbols each). For this case, we have the following result
\begin{theorem}[Distributed Compression Rate Region]
    \label{thm:partial_dsc}
    In the distributed compression scheme as described above, a rate tuple $(R_1,...,R_L)$ is achievable if and only if
    \begin{equation}
        \sum_{l \in \Lambda} R_l \geq \frac{|\Lambda|^2}{L}\frac{h^*}{2} 
    \end{equation}
    for every $\Lambda \subseteq [L]$.
\end{theorem}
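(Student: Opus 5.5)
The plan is to reduce the statement to the general (information-spectrum) Slepian--Wolf theorem for correlated general sources, in the spirit of Miyake--Kanaya and \cite[\S 7]{han_book}. Adapted to our normalisation, that theorem says the achievable region is the set of tuples with
\begin{equation*}
    \sum_{l\in\Lambda} R_l \geq \plimsup_{n\tti}\frac{L}{n(n-1)s(n)^d}\log\frac{1}{\prob{B^{(n)}_\Lambda \mid B^{(n)}_{\Lambda^c}}}
\end{equation*}
for every $\Lambda\subseteq[L]$, where $B^{(n)}_\Lambda=(B^{(n)}_l)_{l\in\Lambda}$. I would first check, as the paper remarks for the single-source case, that the rescaling by $1/(\binom{n}{2}s(n)^d)$ does not affect either half of that theorem. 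For achievability, this means checking random binning into $2^{nR_l(n-1)s(n)^d/L}$ bins with a joint-typicality decoder built from the sets $\mathcal{T}_n^\epsilon$ of Theorem~\ref{thm:aep} and their conditional analogues. For the converse, it means checking the standard Verd\'u--Han type lower bound on the error probability. The whole problem then reduces to computing the conditional spectral entropy rate on the right-hand side and showing that it is a genuine limit in probability equal to $\frac{|\Lambda|^2}{L}\frac{h^*}{2}$.

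For that computation, write $V_\Lambda$ for the nodes in the blocks indexed by $\Lambda$, with $|V_\Lambda|=|\Lambda|n/L$. The pair $(B_\Lambda,B_{\Lambda^c})$ determines $G_n$. Moreover, $B_{\Lambda^c}$ is exactly the set of edge indicators with at least one endpoint outside $V_\Lambda$. Hence
\begin{equation*}
    -\log\prob{B_\Lambda\mid B_{\Lambda^c}} = -\log\prob{G_n} + \log\prob{B_{\Lambda^c}}.
\end{equation*}
By Theorem~\ref{thm:aep}, the first term is $\binom{n}{2}s(n)^d h^*(1+o_p(1))$. The key step is an AEP for the marginal $B_{\Lambda^c}$, namely
\begin{equation*}
    -\log\prob{B_{\Lambda^c}} = \left(\binom{n}{2}-\binom{|V_\Lambda|}{2}\right)s(n)^dh^*\,(1+o_p(1)).
\end{equation*}
Subtracting leaves $\binom{|\Lambda| n/L}{2}s(n)^dh^*(1+o_p(1))$. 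Multiplying by $L/(n(n-1)s(n)^d)$ then gives $\frac{|\Lambda|^2}{2L}h^*$, as claimed. Because the limit in probability exists, the $\plimsup$ and $\pliminf$ coincide, so the inner and outer bounds match.

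I would prove the marginal AEP by rerunning the argument behind Theorems~\ref{thm:limit} and~\ref{thm:aep} on the sub-ensemble of edges touching $V_{\Lambda^c}$. For the upper bound on $-\log\prob{B_{\Lambda^c}}$, lower bound $\prob{B_{\Lambda^c}}$ by restricting to positions quantised to a grid of mesh $\ll s(n)$. H\"older continuity of $p$ makes the edge probabilities nearly constant on cells, and the cost of specifying the cells is $O(n\log n)$ bits, which is $o(n^2s(n)^d)$ exactly because $s(n)^d\gg \log n/n$. For the lower bound, use $\prob{B_{\Lambda^c}}\le \max_z \prob{B_{\Lambda^c}\mid Z=z}$, or equivalently compare with the conditional self-information given $Z_n$. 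That conditional self-information is a sum of independent-given-$Z$ edge terms, and its concentration follows from a variance bound using \eqref{ass:var_1}--\eqref{ass:var_2}, as in Theorem~\ref{thm:aep}. The mean is $h^*s(n)^d$ per pair by the small-$s$ asymptotics $f_{\mathcal K}(r)\approx \kappa_{d-1}\, d\, r^{d-1}$, with boundary effects negligible.

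I expect the main obstacle to be this conditional/marginal AEP step. The edges in $B_{\Lambda^c}$ leak information about the positions of the nodes in $V_\Lambda$, and one must show that this leakage, and the resulting dependence between $B_\Lambda$ and $B_{\Lambda^c}$ beyond the shared positions, is $o_p(n^2s(n)^d)$. I would handle it by sandwiching the true log-probability between the conditional-on-positions self-information and that quantity plus the $O(n\log n)$ position-description cost, both of which concentrate.
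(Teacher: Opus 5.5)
Your proposal is correct. Its coding half matches what the paper does: random binning with a typicality decoder for achievability, and a Han Lemma~7.2.2 / Miyake--Kanaya type lower bound on $P_E^{(n)}$ for the converse, both rescaled by $L/(n(n-1)s(n)^d)$. The real difference is how you identify the conditional spectral entropy rate.

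The paper asserts the exact identity $\mathbb{P}(N_S\mid N_{S^c})=\mathbb{P}(G_n[S])$, claiming that the edges inside $S$ are independent of $N_{S^c}$. It then applies the AEP (Theorem~2) to the induced subgraph. You instead use the exact chain rule $\mathbb{P}(B_\Lambda\mid B_{\Lambda^c})=\mathbb{P}(G_n)/\mathbb{P}(B_{\Lambda^c})$, together with a marginal AEP for the edges touching $V_{\Lambda^c}$.

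The paper's route is shorter, since Theorem~2 is reused as a black box. However, its identity silently drops the cross edges $X_{ij}$ ($i\in S$, $j\notin S$) from the conditioning. These are correlated with the edges inside $S$ through $Z_S$: already for $n=3$, $S=\{1,2\}$, $X_{12}$ is in general not independent of $(X_{13},X_{23})$. So only an asymptotic version of the identity holds, with the discrepancy controlled by $I(G_n[S];Z_S)=o(|S|^2s(n)^d)$.

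Your route costs a rerun of Theorems~1--2 on a non-complete edge set, but this goes through essentially verbatim. With $v=|\Lambda|n/L$, the box quantisation gives $I(B_{\Lambda^c};Z_n)\le nd\log m+\bigl(\binom{n}{2}-\binom{v}{2}\bigr)\bigl(H(X_{12}\mid M)-H(X_{12}\mid\mathcal{Z}_n)\bigr)=o(n^2s(n)^d)$. It also handles exactly the position leakage you flag, which the paper's argument overlooks.

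Two small repairs to your sketch of the marginal AEP.

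First, the bound $\mathbb{P}(B_{\Lambda^c})\le\max_z\mathbb{P}(B_{\Lambda^c}\mid Z_n=z)$ is too crude, because the maximising configuration need not resemble the realised one. A pointwise cell-by-cell lower bound on $\mathbb{P}(B_{\Lambda^c})$ can also fail when $p$ vanishes or has unbounded log-ratios, e.g.\ for compactly supported $p$. Compare with the conditional self-information in $L^1$ instead, exactly as the paper does with $r_n$:
$\mathbb{E}\bigl|\log\mathbb{P}(B_{\Lambda^c}\mid Z_n)-\log\mathbb{P}(B_{\Lambda^c})\bigr|\le I(B_{\Lambda^c};Z_n)+\tfrac{2}{e}\log e$.
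The quantisation then happens only at the level of entropies, via $|h_2(x)-h_2(y)|\le h_2(|x-y|)$.

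Second, the terms $Y_{ij}$ are independent only given $Z_n$. In the variance of their sum, keep the covariances between pairs sharing a vertex; they enter through $\mathrm{Var}(\mathbb{E}[\,\cdot\mid Z_n])$. They are $O(n^3s(n)^{2d})$, hence harmless after dividing by $(n^2s(n)^d)^2$.
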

\begin{remark}
    In particular, when $\Lambda = [L]$, we require $\frac{1}{L}\sum_{l=1}^L R_l \geq \frac{h^*}{2}$. The factor of $1/2$ comes from the fact that the information about each edge is stored twice overall. To see this, note that the variable $X_{12}$ is in both $N_1$, and $N_2$. 
\end{remark}

\section{Proof Ideas}
\label{sec:proofs}

\subsection{Theorem \ref{thm:limit}}
We first prove that $H(\graph_n) = H(\graph_n|\mathcal{Z}_n) + O(n\log n)$. We follow the idea of \cite[Theorem D.5]{janson2010graphons}, which establishes this result in the dense regime ($s(n) = \Theta(1)$). It follows by discretising the domain $\mathcal{K}$ into $m^d \in \mathbb{N}$ boxes, and conditioning on the box that contains each node. The proof of our result requires $m$ to vary with $n$, which is not the case for the dense regime. Explicitly, if $M$ is the random vector collecting these labels, which has entropy $H(M) \leq nd\log m$,
\begin{equation}
    H(\graph_n|\mathcal{Z}_n) \leq H(\graph_n) \leq H(\graph_n|M) + H(M)
\end{equation}
\begin{align}
    \Rightarrow &H(\graph_n)-H(\graph_n|\mathcal{Z}_n) \leq nd\log m + H(\graph_n|M) - H(\graph_n|\mathcal{Z}_n) \nonumber \\
    &\leq nd\log m + \binom{n}{2}(H(X_{12}|M) - H(X_{12}|\mathcal{Z}_n))
\end{align}
Then, letting $m$ grow with $n$, the aim is to show that the second term is $o(s(n)^d)$. $H(X_{12}|M)$ is the entropy of an edge existing with respect to an `averaged' version of $p$, and so using the H\"older continuity of $p$ it can be shown that\vspace{2cm}
\begin{equation}
    H(X_{ij}|M) - H(X_{ij}|\mathcal{Z}_n) \leq (1+o(1)) C\frac{\log (s(n)m)}{s(n)^{\alpha}m^\alpha}.
\end{equation}
for some $C > 0$. This means that we can choose any $m = m(n)$ such that $\log(m) \ll ns(n)^d$ and $m s(n)^{(d+\alpha)/\alpha}\gg 1$ to get
\begin{equation}
    \frac{H(\graph_n)-H(\graph_n|\mathcal{Z}_n)}{\binom{n}{2}s(n)^d} \leq \frac{nd\log m}{\binom{n}{2}s(n)^d}+(1+o(1)) C\frac{\log (s(n)m)}{s(n)^{d+\alpha}m^\alpha}
\end{equation}
which goes to 0 as $n\tti$. To complete the result, we use Theorem 1 from \cite{baker2026entropy}. Under the assumption \eqref{ass:h_star}, this gives
\begin{equation}
    \frac{H(\graph_n|\mathcal{Z}_n)}{\binom{n}{2}s(n)^d} \overset{n\tti}{\to} h^*.
\end{equation}
Combining with the above argument yields the result. \qed

\subsection{Theorem \ref{thm:aep}}

We begin by writing
\begin{align}
    |-\log \prob{G_n} - H(\graph_n)| &\leq |H(\graph_n) - H(\graph_n|\mathcal{Z}_n)| \nonumber \\ &+ |-\log \prob{G_n} - H(\graph_n | \mathcal{Z}_n)| \nonumber 
\end{align}
By Theorem \ref{thm:limit} the first term is $o(n^2s(n)^d)$, and for the second term, we use Markov's inequality to give
\begin{align}
    \mathbb{P}(|-\log\space &\prob{G_n} - H(\graph_n|\mathcal{Z}_n)| > \binom{n}{2}s(n)^d\epsilon) \nonumber \\&\leq \frac{\mathbb{E}_{G_n}[|-\log \prob{G_n} - H(\graph_n|\mathcal{Z}_n)|]}{\binom{n}{2}s(n)^d\epsilon} \nonumber \\
    &= \frac{\mathbb{E}_{G_n,Z_n}[|-\log \prob{G_n} - H(\graph_n|\mathcal{Z}_n)| | Z_n]}{\binom{n}{2}s(n)^d\epsilon} \nonumber \\ 
    \label{eq:split_bound}
    &\leq \frac{\mathbb{E}_{G_n,Z_n}[|-\log \prob{G_n|Z_n} - H(\graph_n|\mathcal{Z}_n)|]}{\binom{n}{2}s(n)^d\epsilon} \nonumber \\ &+ \frac{\mathbb{E}_{G_n,Z_n}[|\log \prob{G_n|Z_n} - \log \prob{G_n}|]}{\binom{n}{2}s(n)^d\epsilon}
\end{align}

To show that the first term goes to zero, we need the following lemma.
\begin{lemma}[Conditional AEP]
    \label{thm:conditional_aep}
    Let $\{(G_n,Z_n)\}_{n=2}^\infty$ be a sequence of pairs of random sparse SRGGs and the node locations that generated it. Then,
    \begin{equation}
        s(n)^{-d}\binom{n}{2}^{-1}\left(-\log \prob{G_n|Z_n} - H(\graph_n|\mathcal{Z}_n)\right) \to 0
    \end{equation}
    in expectation as $n\tti$.
\end{lemma}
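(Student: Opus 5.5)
Conditionally on $Z_n$, the edges $X_{ij}$ are independent Bernoulli$(p_n(R_{ij}))$. Hence
\begin{equation*}
-\log \prob{G_n|Z_n} = \sum_{i<j} \ell_{ij}, \qquad \ell_{ij} := -X_{ij}\log p_n(R_{ij}) - (1-X_{ij})\log(1-p_n(R_{ij})),
\end{equation*}
with $\mathbb{E}[\ell_{ij}\mid Z_n] = h_2(p_n(R_{ij}))$ and $\mathbb{E}[\sum_{i<j}\ell_{ij}] = H(\graph_n|\mathcal{Z}_n)$. Convergence in $L^2$ implies convergence in expectation (in $L^1$), so I will show that $\Var(\sum_{i<j}\ell_{ij}) = o(n^4 s(n)^{2d})$ and then apply Cauchy--Schwarz. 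By the law of total variance, the variance splits into two pieces. The first is the mean conditional variance $\mathbb{E}[\Var(\sum \ell_{ij}\mid Z_n)]$. The second is the variance of the conditional mean, $\Var\big(\sum_{i<j} h_2(p_n(R_{ij}))\big)$, which is a U-statistic in the i.i.d. points $Z_1,\dots,Z_n$.

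For the first piece, conditional independence gives
\begin{equation*}
\Var\Big(\sum \ell_{ij}\,\Big|\, Z_n\Big) = \sum_{i<j} p_n(1-p_n)\log^2\frac{p_n}{1-p_n}\Big|_{R_{ij}}.
\end{equation*}
Its expectation is $\binom{n}{2}\int_0^\infty f_{\mathcal{K}}(r)\,g(r/s(n))\,dr$, where $g(x)=p(x)(1-p(x))\log^2\frac{p(x)}{1-p(x)}$. The pairwise-distance density satisfies $f_{\mathcal{K}}(r) \le \omega_d r^{d-1}$ (surface area of the unit sphere times $r^{d-1}$, since $\mathcal{K}$ has unit volume). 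Substituting $r = s(n)x$ bounds this by $C n^2 s(n)^d \int x^{d-1} g(x)\,dx$, which is finite by \eqref{ass:var_2}. This is $O(n^2 s(n)^d)$, and it is $o(n^4 s(n)^{2d})$ because $n^2 s(n)^d \to \infty$ in our regime $s(n) \gg (\log n/n)^{1/d}$.

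For the second piece, write $\varphi(z,z') = h_2(p(\|z-z'\|/s(n)))$ and expand the variance of $\sum_{i<j}\varphi(Z_i,Z_j)$ by index overlap. Pairs of pairs that are disjoint contribute zero covariance by independence. There are $O(n^3)$ pairs of pairs sharing exactly one index, and each contributes at most $\mathbb{E}[\varphi(Z_1,Z_2)\varphi(Z_1,Z_3)] \le \mathbb{E}\big[(\mathbb{E}[\varphi(Z_1,Z_2)\mid Z_1])^2\big]$. Since $\mathbb{E}[\varphi(Z_1,Z_2)\mid Z_1] \le C s(n)^d\int x^{d-1}h_2(p(x))\,dx = O(s(n)^d)$ by the finiteness of $h^*$, this term totals $O(n^3 s(n)^{2d})$. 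There are $O(n^2)$ identical pairs, each contributing at most $\mathbb{E}[\varphi^2] \le \mathbb{E}[\varphi] = O(s(n)^d)$ since $\varphi\le 1$; these total $O(n^2 s(n)^d)$. Both totals are $o(n^4 s(n)^{2d})$. Assumption \eqref{ass:var_1} could be used here in place of the cruder bound $\varphi\le1$ if a sharper moment estimate is needed.

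Combining the two pieces, the normalized quantity has second moment
\begin{equation*}
O\big(1/(n^2 s(n)^d)\big) + O(1/n) \to 0,
\end{equation*}
which gives convergence in $L^2$ and hence in expectation. The main obstacle is the uniform bound $f_{\mathcal{K}}(r)\le C r^{d-1}$ for all $r$, including large $r$ and boundary effects of $\mathcal{K}$. I would handle it directly: for fixed $z$, the set of $z'\in\mathcal{K}$ at distance in $[r,r+dr]$ lies on a sphere shell of volume $\omega_d r^{d-1}dr$. Averaging over $z$ then gives the bound with $C=\omega_d$, with no need for a boundary analysis.
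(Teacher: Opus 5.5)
Your proof is correct, and it departs from the paper's proof at the paper's weakest step. The paper writes $-\log\prob{G_n|Z_n}=\sum_{i<j}Y_{ij}$ and applies Chebyshev with $\Var\bigl(\sum_{i<j}Y_{ij}\bigr)=\binom{n}{2}\Var(Y_{12})$, citing ``the independence of each $Y_{ij}$''. It then bounds the single-edge variance by the law of total variance, splitting the distance integral at $\sqrt{s(n)}$ and using that $p$ is decreasing. This gives convergence in probability; the upgrade to convergence in expectation comes later, in the proof of Theorem~2, via de la Vall\'ee Poussin's criterion. However, the $Y_{ij}$ are independent only conditionally on $Z_n$. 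Unconditionally, $Y_{12}$ and $Y_{13}$ are correlated through $Z_1$, so the paper's variance identity silently drops $O(n^3)$ covariance terms. You apply the law of total variance to the whole sum, so your conditional-variance piece really is a sum over independent edges. Your U-statistic expansion of $\Var\bigl(\sum_{i<j}\varphi(Z_i,Z_j)\bigr)$ controls exactly the overlapping-pair terms and shows they contribute only $O(1/n)$ after normalisation. So the paper's conclusion stands, but your argument is what actually justifies it. Your route also buys two simplifications. First, the global bound $f_{\mathcal{K}}(r)\le\omega_d r^{d-1}$ removes the split at $\sqrt{s(n)}$ and any reliance on monotonicity or pointwise decay of $p$; neither is part of the standing assumption. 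Second, bounding the normalised second moment directly by $O\bigl(1/(n^2s(n)^d)+1/n\bigr)$ gives $L^1$ convergence with an explicit rate and no separate uniform-integrability step. Finally, your bound $\Var(\varphi)\le\mathbb{E}[\varphi^2]\le\mathbb{E}[\varphi]=O(s(n)^d)$ is the correct one. The paper's displayed formula for $\Var(\mathbb{E}[Y_{ij}|Z])$ has its two terms with reversed signs, although its $O(s(n)^d)$ conclusion is still true.
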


 This is easily proved using the weak law of large numbers. The only condition we need to verify is that:
\begin{align*}
    &\frac{\Var(\log \prob{G_n|Z_n})}{\binom{n}{2}^2s(n)^{2d}} \nonumber \\ &= \frac{\Var(X_{12}\log \prob{X_{12}|Z_n}+ (1-X_{ij})\log(1-\mathbb{P}(X_{12}|Z_n)))}{\binom{n}{2}s(n)^{2d}}
\end{align*}
is finite, which allows us to use Chebyshev's inequality. This is proven by a similar argument to the proof of \cite[Theorem 1]{baker2026entropy} cited above, and uses assumptions \eqref{ass:var_1} and \eqref{ass:var_2}. This condition also allows us to upgrade convergence in probability (given by WLLN) to convergence in expectation via de la Vall\'ee Poussin's criterion for uniform integrability.

The second term goes to 0 via the following argument. Let
\begin{equation}
    r_n = r(G_n, Z_n) = \frac{\prob{G_n|Z_n}}{\prob{G_n}}.
\end{equation}
Then, we have
\begin{equation*}
    \mathbb{E}_{G_n,Z_n}[|\log \prob{G_n|Z_n} - \log \prob{G_n}|] = \mathbb{E}_{G_n,Z_n}[|\log r_n|]
\end{equation*}
\begin{equation}
    = \mathbb{E}_{Z_n}\left[\sum_{G_n \in \graph_n} \prob{G_n|Z_n} |\log r_n|\right]
\end{equation}
\begin{equation}
    = \mathbb{E}_{Z_n}\left[\sum_{G_n \in \graph_n} \prob{G_n} |r_n \log r_n| \right]
\end{equation}
Then, we make the following two observations. First, $|x| = x - 2x\mathds{1}(x < 0)$, and second, for $x \in (0,1)$, $0 < -x\log x < \frac{1}{e}\log e$. Therefore,
\begin{align}
    \mathbb{E}_{G_n,Z_n}[|\log r_n|] \leq \mathbb{E}_{Z_n}\left[\sum_{G_n \in \graph_n} \prob{G_n} r_n \log r_n \right] \nonumber \\ - 2\mathbb{E}_{Z_n}\left[\sum_{G_n \in \graph_n} \prob{G_n} r_n \log r_n \Bigg| r_n < 1 \right] \nonumber \\
    \leq \mathbb{E}_{Z_n}\left[\sum_{G_n \in \graph_n} \prob{G_n} r_n \log r_n \right] \nonumber \\+ \frac{2}{e}\log(e) \mathbb{E}_Z\left[\sum_{G_n \in \graph_n} \prob{G_n} \right] \nonumber \\
    = \mathbb{E}_{G_n,Z_n}\left[\log \frac{\prob{G_n|Z_n}}{\prob{G_n}}\right] + \frac{2}{e}\log e
\end{align}
\begin{equation}
    = H(\graph_n) - H(\graph_n|\mathcal{Z}_n) + \frac{2}{e}\log e = o(n^2s(n)^d)
\end{equation}
This places an $o(1)$ bound on the probability that $|\log(1/\prob{G_n}) - H(\graph_n)| > \binom{n}{2}s(n)^d\epsilon$ for any $\epsilon>0$. Combining this with the fact that $(\binom{n}{2}s(n)^d)^{-1}H(\graph_n) \to h^*$ as $n\tti$, we get $\plim_{n\tti} \frac{1}{\binom{n}{2}s(n)^d}\log \frac{1}{\prob{G_n}} = h^*$. The properties of the typical set are derived using simple standard techniques.

\subsection{Theorem \ref{thm:partial_dsc}}

\begin{lemma}
    \label{thm:nbhd_reduction}
    Let $S_n \subseteq V_n$, and $S_n^c := V_n \setminus S_n$ then
    \begin{equation}
        H(N_{S_n}|N_{S_n^c}) = H(\graph_n[S_n])
    \end{equation}
    where $\graph_n[S_n]$ is the ensemble of subgraphs $G_n[S_n]$ of $G_n$ restricted to node set $S$. This implies that if $\{S_n\}_{n=1}^\infty$ is a sequence of subsets of $V_n$ indexed by $n$, such that $|S_n| \to \infty$ as $n\tti$, then
    \begin{equation}
        \label{eq:nbhd_aep}
        \plim_{n\tti} \frac{1}{\binom{|S_n|}{2}s(n)^d}\log \frac{1}{\prob{N_{S_n}|N_{S_n^c}}} = h^* 
    \end{equation}
\end{lemma}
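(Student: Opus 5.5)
The plan is to reduce the conditional entropy to an entropy of edge variables by pure bookkeeping, and then to obtain \eqref{eq:nbhd_aep} by running Theorems \ref{thm:limit} and \ref{thm:aep} on the induced subgraph. Write the edge variables of $G_n$ as three disjoint families:
\begin{equation*}
E_{S} := (X_{ij})_{i,j\in S_n}, \quad E_{S^c} := (X_{ij})_{i,j\in S_n^c}, \quad E_{\partial} := (X_{ij})_{i\in S_n,\, j\in S_n^c}.
\end{equation*}
Since $X_{ij}$ and $X_{ji}$ are the same variable, $N_{S_n}$ is a relabelling of $(E_S,E_\partial)$ and $N_{S_n^c}$ is a relabelling of $(E_{S^c},E_\partial)$. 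Deterministic bijections do not change entropy, so
\begin{equation*}
H(N_{S_n}|N_{S_n^c}) = H(E_S,E_\partial \,|\, E_{S^c},E_\partial) = H(E_S\,|\,E_{S^c},E_\partial).
\end{equation*}
Here $E_S$ is exactly the edge set of $G_n[S_n]$, which is the first half of the claim.

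The main obstacle is removing the conditioning on $(E_{S^c},E_\partial)$. The upper bound $H(E_S|E_{S^c},E_\partial)\le H(E_S)=H(\graph_n[S_n])$ is immediate. For the reverse direction I will not compare the unconditional quantities directly. Instead I will pass through the node locations, where all edges are independent. Conditionally on $Z_n$, the families $E_S$, $E_{S^c}$ and $E_\partial$ are independent, and the law of $E_S$ depends only on $(Z_i)_{i\in S_n}$. Hence
\begin{equation*}
H(E_S|E_{S^c},E_\partial,\mathcal{Z}_n)=H(E_S|\mathcal{Z}_{S_n}).
\end{equation*}
This quantity is the conditional entropy of an SRGG on $|S_n|$ uniform points in $\mathcal{K}$ with connection function $p(\cdot/s(n))$. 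Sandwiching, I get
\begin{equation*}
H(E_S|\mathcal{Z}_{S_n})\le H(N_{S_n}|N_{S_n^c})\le H(\graph_n[S_n]).
\end{equation*}
The discretisation step in the proof of Theorem \ref{thm:limit}, with $n$ replaced by $|S_n|$, bounds the gap between the two outer terms by $H(\graph_n[S_n])-H(E_S|\mathcal{Z}_{S_n})$. This is the step I would check most carefully: it identifies the two sides only up to that gap, so the equality must be read at the entropy-rate scale $\binom{|S_n|}{2}s(n)^d$. That is the scale at which it is used for \eqref{eq:nbhd_aep}.

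For the limit in probability, the key observation is that $G_n[S_n]$ is itself an SRGG on $|S_n|$ i.i.d.\ uniform points, with sparsity $s(n)$ frozen at its $n$-value. I will rerun the proofs of Theorems \ref{thm:limit} and \ref{thm:aep} with $n\mapsto|S_n|$ throughout, including Lemma \ref{thm:conditional_aep} via WLLN and Chebyshev, using \eqref{ass:var_1}--\eqref{ass:var_2}. The Chebyshev variance bound gives $O\big(1/(\binom{|S_n|}{2}s(n)^d)\big)\to0$ once $|S_n|\to\infty$ with $|S_n|^2s(n)^d\to\infty$. The Markov step controlling $\mathbb{E}|\log r_n|$ is bounded by the entropy gap plus $\frac{2}{e}\log e$, which is again $o(\binom{|S_n|}{2}s(n)^d)$. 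Finally, I transfer from $\log(1/\prob{E_S})$ to $\log(1/\prob{N_{S_n}|N_{S_n^c}})=\log(1/\prob{E_S|E_{S^c},E_\partial})$ using the same likelihood-ratio trick. I apply it with $r=\prob{E_S|E_{S^c},E_\partial}/\prob{E_S}$, whose expected logarithm is the mutual information, and this is at most the gap bounded above. This yields \eqref{eq:nbhd_aep} with limit $h^*$.
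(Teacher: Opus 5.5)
Your reduction differs from the paper's, and yours is the sound one. The paper replaces $H(N_{S_n}\mid N_{S_n^c})$ by $H(E_S\mid E_{S^c})$, asserting that the shared cross edges ``do not contribute''. It then uses independence of $E_S$ and $E_{S^c}$ and invokes Theorem 2 for $G_n[S_n]$; the extended version likewise asserts the pointwise identity $\mathbb{P}(N_{S_n}\mid N_{S_n^c})=\mathbb{P}(G_n[S_n])$. But, as you wrote, the conditioning is on $(E_{S^c},E_\partial)$, and $E_\partial$ shares endpoints with $E_S$, so the exact identity is false. Take $S_n=\{1,2\}$. The left side is $H(X_{12}\mid\text{all other edges})\le H(X_{12}\mid X_{13},X_{23})$. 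By triangle closure, $\mathbb{P}(X_{12}=X_{13}=X_{23}=1)\asymp s(n)^{2d}$, whereas $\mathbb{P}(X_{12}=1)\,\mathbb{P}(X_{13}=X_{23}=1)\asymp s(n)^{3d}$. So for large $n$ the edges are dependent, $H(X_{12}\mid X_{13},X_{23})<H(X_{12})$, and the identity fails. Your hedge is therefore not a shortcoming of your method. The sandwich $H(E_S\mid\mathcal{Z}_{S_n})\le H(N_{S_n}\mid N_{S_n^c})\le H(\mathcal{G}_n[S_n])$ is the true content of the first claim. Your likelihood-ratio transfer, with mean $I(E_S;E_{S^c},E_\partial)$, is exactly what is needed in place of the paper's invalid pointwise identity.

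The remaining gap is your assertion that $H(\mathcal{G}_n[S_n])-H(E_S\mid\mathcal{Z}_{S_n})=o\big(\binom{|S_n|}{2}s(n)^d\big)$; both your Markov step and your transfer rest on it. It does not follow from $|S_n|\to\infty$ and $|S_n|^2s(n)^d\to\infty$. Rerun the discretisation with $k=|S_n|$ nodes. The term $kd\log m/\big(\binom{k}{2}s(n)^d\big)$ forces $\log m=o(ks(n)^d)$, while the H\"older term forces $m$ to be at least of order $s(n)^{-(d+\alpha)/\alpha}$. Together these require $|S_n|s(n)^d/\log(1/s(n))\to\infty$.

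This is not an artefact of the bound. When $|S_n|s(n)^d\to0$, the induced graph is essentially a sparse matching of nearly independent Bernoulli$(\bar p_n)$ edges with $\bar p_n\asymp s(n)^d$. Then $H(\mathcal{G}_n[S_n])$ is of order $\binom{|S_n|}{2}s(n)^d\log(1/s(n))$, and no argument passing through the unconditional $\mathbb{P}(E_S)$ can produce $h^*$.

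You should state the hypothesis explicitly. It holds in the only application, Theorem 3, where $|S_n|=|\Lambda|n/L$ and $ns(n)^d\gg\log n$; the paper's own proof in fact takes $|S_n|$ proportional to $n$. Under $|S_n|\to\infty$ alone, apply your ratio trick to $\mathbb{P}(E_S\mid\mathcal{Z}_n)/\mathbb{P}(E_S\mid E_{S^c},E_\partial)$. This shows that what is really needed is $I(E_S;\mathcal{Z}_n\mid E_{S^c},E_\partial)=o\big(\binom{|S_n|}{2}s(n)^d\big)$, i.e.\ that the cross edges localise the nodes of $S_n$; neither you nor the paper establishes this.

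A minor point when rerunning Lemma 1: the $Y_{ij}$ are not unconditionally independent, since pairs sharing a node are correlated through $\mathcal{Z}_n$. These covariances add only $O(|S_n|^3s(n)^{2d})$ to the variance and are harmless.
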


\begin{proof}
    For the first part, fix $n$, and let $k < n$. We will show the first part for ${S} = \{1,..,k\}$,  which generalises to any set ${S}$ by re-labelling the graph. Writing out $N_{{S}}$ in full, we have, recalling that $X_{ij}$ is identified with $X_{ji}$,
    \begin{equation}
        N_{{S}} = \{X_{ij}\}_{\substack{i \in {S}, 1 \leq j \leq n\\
                                    i \neq j}}, N_{{S}^c} = \{X_{ij}\}_{\substack{i \notin {S}, 1 \leq j \leq n\\
                                    i \neq j}}
    \end{equation}
    Then, the entropy $H(N_{S}|N_{S^c
    })$ is the entropy of $A_{{S}} := N_{S} \setminus (N_{S} \cap N_{{S}^c})$ conditioned on $\widetilde{A}_{S} := N_{{S}^c} \setminus (N_{{S}} \cap N_{{S}^c})$, since the edges in the intersection do not contribute to the entropy. The key observation is that if $X_{ij} \in A_{S}$, then there does not exist an edge $X_{uv}$ in $\widetilde{A}_{S}$ with $u = i$ or $v = j$. Therefore, every edge variable in $A_{S}$ is independent of all those in $\widetilde{A}_{S}$. Then,
    \begin{equation*}
        H(N_S|N_{S^c}) = H(A_{S}|\widetilde{A}_{S}) = H(\{X_{ij}\}_{1\leq i<j \leq k}) = H(\graph_n[S])
    \end{equation*}
    as required. For the second part of the theorem, let $\{S_n\}$ be a sequence of sets with size proportional to $n$. Since each edge is added to the graph independently, the subgraph $G_n[S]$ is marginally distributed as a SRGG on $|S|$ nodes with sparsity $s(n)$ inherited from the parent graph $G_n$. Then, we can apply Theorem \ref{thm:aep} to get the result.
\end{proof}
\begin{proof}[Proof of Theorem \ref{thm:partial_dsc}]
    For $\Lambda \subseteq [L]$, let $B_{\Lambda}^{(n)} = \bigcup_{l\in \Lambda} B_l^{(n)}$, the union of the neighbourhoods of all the nodes in block $l$ for each $l \in \Lambda$. After establishing the above Lemma \ref{thm:nbhd_reduction}, for each $\Lambda \subseteq [L]$, and for all $\epsilon > 0$ we have a typical set $\mathcal{T}_n^\epsilon(\Lambda)$, which we define as
    \begin{align}
        &\mathcal{T}_n^\epsilon(\Lambda) := \Big\{(B_\Lambda^{(n)},B_{\Lambda^c}^{(n)}) :\nonumber \\ &\Big| \frac{1}{{\binom{{|\Lambda|n}/{L}}{2} s\left(n\right)^d}}\log {\mathbb{P}(B_{\Lambda}^{(n)}|B_{\Lambda^c}^{(n)})} 
        -h^* \Big| < \epsilon\Big\}
    \end{align}
    whose existence is guaranteed by the second part of Lemma \ref{thm:nbhd_reduction}.
    With this, we can apply the standard random binning argument, which we sketch here. Let $L$ be a fixed integer, then we design the codebook as follows for an $L$-tuple of rates $(R_1,...,R_L)$. For each $l = 1,...,L$, we generate assignments by independently assigning each $b_l^{(n)} \in \mathcal{B}_l^{(n)}$ a uniformly random element, called its index $\phi_l^{(n)}(b_l^{(n)})$, from the codebook
    \begin{equation*}
        \mathcal{M}_l^{(n)} := \{1,2,3,...,2^{\frac{n}{L}(n-1)s(n)^dR_l}\}.
    \end{equation*}
    Note that more than one realisation $b_l^{(n)}$ may be assigned the same index. We reveal these assignments to both the encoder, and the decoder. To encode a graph, each encoder $l$ sees its block, $b_l^{(n)}$. Then, the independent encodings are sent as the tuple $(\phi_1^{(n)}(b_1^{(n)}),...,\phi_L^{(n)}(b_L^{(n)}))$. The decoder receives this tuple, and searches for their own tuple $(\hat{b}_1^{(n)},...,\hat{b}_L^{(n)})$ (or equivalently a graph $\hat{G}_n$), such that $(\hat{b}_1^{(n)},...,\hat{b}_L^{(n)})$ is unique, typical, and $\phi_l^{(n)}(\hat{b}_l^{(n)}) = \phi_l^{(n)}(b_l^{(n)})$ for every $l$. If they cannot find such a tuple, an error is declared. Therefore, an error occurs if, first, the graph is atypical, or second, there exists an alternative decoding that is typical, and matches the encoding of the target graph. By Theorem \ref{thm:aep} the first error probability goes to 0. The probability of the second error event is bounded above by the sum over such $\Lambda$ of
    \begin{align}
        \prob{E_2(\Lambda)} = \mathbb{P}\Big(\exists &\hat{b}_\Lambda^{(n)}\neq B_\Lambda^{(n)} \space \text{ s.t. } (B_{\Lambda}^{(n)}, B_{\Lambda^c}^{(n)}) \in \mathcal{T}_n^\epsilon(\Lambda) \nonumber \\ &\text{ and } \phi_l^{(n)}(\hat{b}_l^{(n)}) = \phi_l^{(n)}(B_l^{(n)})\space \forall l \in \Lambda\Big) \nonumber 
    \end{align}
    Following the standard argument, this is bounded above by
    \begin{equation}
        \prob{E_2(\Lambda)} \leq 2^{-\frac{n(n-1)}{L}s(n)^d\sum_{l\in \Lambda}R_l}|\mathcal{T}_n^\epsilon(\Lambda)|
    \end{equation}
    \begin{equation}
        \leq 2^{-\frac{n(n-1)}{L}s(n)^d\sum_{l\in \Lambda}R_l + \binom{v}{2}s(n)^d(h^*+\epsilon)}
    \end{equation}
    where $v = |B_{\Lambda}^{(n)}| = |\Lambda|n/L$. From this we can see that a sufficient condition for achievability w.h.p. as $n\tti$ is
    \begin{equation}
        \sum_{l\in \Lambda} R_l \geq \frac{|\Lambda|^2}{L}\frac{h^*}{2}
    \end{equation}
    The proof of the converse is standard and may be found in the extended version. The important point is that
    \begin{align}
        &\lim_{n\tti} P_E^{(n)} \nonumber \\ &\geq \lim_{n\tti} \mathbb{P}\Big(\frac{1}{{\frac{n}{L}}(n-1)s(n)^d} \log \frac{1}{\mathbb{P}(B_{\Lambda}^{(n)}|B_{\Lambda^c}^{(n)})} \geq \sum_{l \in \Lambda} R_l + \gamma\Big) \nonumber \\
        &= \mathds{1}\left(\lim_{n\tti}\frac{2}{|\Lambda|}\frac{(n-1)}{\left(|\Lambda|n/{L}-1\right)}\left(\sum_{l \in S} R_l + \gamma\right) < h^*\right)
    \end{align}
    which follows from \cite[Lemma 7.2.2]{han_book}, the definition of $\plim$ and Lemma \ref{thm:nbhd_reduction}, and gives that any achievable rate tuple must satisfy
    \begin{equation}
        \sum_{l \in \Lambda} R_l \geq \frac{|\Lambda|^2}{L}\frac{h^*}{2}
    \end{equation}
    for the matching converse.
\end{proof}

\section{Conclusion}
\label{sec:conclusion}

In this paper, we have studied the information theoretic properties of the soft random geometric graph in the sparse super-connectivity regime. We gave limit results which characterise the entropy rate of the SRGG, and establish that it is an information-stable source, in that its information content converges to the entropy rate with high probability. Next, we found the Slepian-Wolf rate region for distributed compression of the SRGG where there are a finite number of blocks of nodes. For future research, a tractable distributed compression algorithm would be a practically useful direction. From a theoretical standpoint, it would be interesting to extend our results to the regime where the number of compression units scales with $n$, and compare this region to the finite block size we considered. Extensions to lossy distributed compression may be of interest, including Berger-Tung bounds for this problem for a range of distortion functions.

\section*{Acknowledgment}
 The authors would like to thank Yihan Zhang for his helpful discussion. OB was funded by the EPSRC Centre for Doctoral Training in Computational Statistics and Data Science (COMPASS), Grant Number EP/S023569/1.
 
\bibliographystyle{IEEEtran}
\bibliography{bibliofile}

\end{document}


\maketitle

This supplement contains the full proofs of Theorems 1,2 and 3 in the main document: ``\textbf{{Entropy and Distributed Source Coding of Connected Soft Random Geometric Graphs}}".

\section{Assumption and Statement of Results}
\label{sec:proofs}

We first re-state our standing assumption, and our main results:
\begin{assumption}
    \label{ass:1}
    The connection function $p_n$ is given by $p_n(r) = p(r/s(n))$, where $p$ is a H\"older continuous function, by which we mean the graphon $W(x,y) = p(\|x-y\|)$ satisfies,
    \begin{equation}
        |W(x,y) - W(x',y')| \leq L\|(x,y)-(x',y')\|^\alpha
    \end{equation}
     for every $(x,y),(x',y') \in \mathcal{K}^2$, where $L, \alpha > 0$ are constants. We assume that, if $d$ is the dimension of the space $\mathcal{K} \subset \mathbb{R}^d$,
        \begin{equation}
            \label{ass:var_1}
            \int_0^\infty r^{2d}h_2(p(r)) dr < \infty
        \end{equation}
        and
        \begin{equation}
            \label{ass:var_2}
            \int_0^\infty r^{d-1}p(r)(1-p(r))\log^2\left(\frac{p(r)}{1-p(r)}\right)dr <\infty
        \end{equation}
    We also define the following integral
    \begin{equation}
        \label{var:h_star}
        h^* := \kappa_{d-1}\int_0^\infty r^{d-1}h_2(p(r))dr < \infty
    \end{equation}
    where $\kappa_{d-1}$ is the volume of the unit $d-1$ ball. 
\end{assumption}

\begin{theorem}[Limit Theorem]
    \label{thm:limit}
    Let $\graph_n$ be the ensemble of SRGGs with connection function $p_n(r) = p(r/s(n))$. Then,
    \begin{equation}
        \lim_{n\tti} \frac{H(\graph_n)}{\binom{n}{2}s(n)^d} = h^*
    \end{equation}
\end{theorem}

\begin{theorem}[Asymptotic Equipartition Property]
    \label{thm:aep}
    If $G_n$ is a random SRGG sampled from $\graph_n$,
    \begin{equation}
        \plim_{n\tti} \frac{1}{\binom{n}{2}s(n)^d} \log \frac{1}{\prob{G_n}} = h^*
    \end{equation} 
    In other words,
    \begin{equation}
        \overline{H}(\graph_n) = \underline{H}(\graph_n) = h^*
    \end{equation}
    This implies the existence of sets defined for all $\epsilon > 0$,
    \begin{equation}
        \mathcal{T}_n^\epsilon := \left\{G_n : \left|\log \frac{1}{\prob{G_n}} - H(\graph_n)\right| \leq \binom{n}{2}s(n)^d \epsilon\right\} 
    \end{equation}
    such that
    \begin{enumerate}
        \item $\prob{G_n \in \mathcal{T}_n^\epsilon} \to 1$ as $n\tti$
        \item For each $G_n \in \mathcal{T}_n^\epsilon$, $2^{-\binom{n}{2}(s(n)^dh^* +\epsilon)} \leq \prob{G_n} \leq 2^{-\binom{n}{2}s(n)^d(h^*- \epsilon)}$
        \item For every $\delta > 0$, there exists an $n^*$ so that for all $n > n^*$, we have
        \begin{equation}
            (1-\delta)2^{\binom{n}{2}s(n)^d(h^*-\epsilon)} \leq |\mathcal{T}_n^\epsilon| \leq 2^{\binom{n}{2}s(n)^d(h^*+\epsilon)}
        \end{equation}
    \end{enumerate}
\end{theorem}

\begin{theorem}[Distributed Compression Rate Region]
    \label{thm:partial_dsc}
    In the distributed compression scheme, a rate tuple $(R_1,...,R_L)$ is achievable if and only if
    \begin{equation}
        \sum_{l \in \Lambda} R_l \geq \frac{|\Lambda|^2}{L}\frac{h^*}{2}
    \end{equation}
    for every $\Lambda \subseteq [L]$.
\end{theorem}

\section{Proof of Theorem \ref{thm:limit}}
We begin with a useful bound involving the binary entropy function.
\begin{lemma}
    \label{thm:h2_bound}
    For $x,y \in [0,1]$,
    \begin{equation}
        \label{bin_bound}
        |h_2(x)-h_2(y)| \leq h_2(|x-y|).
    \end{equation}
\end{lemma}

\begin{proof}
    Without loss of generality, we can assume $0 < x < y < 1$ due to symmetry. First, rewrite $h(y)-h(x)$ in terms of the derivative of $h_2$.
    \begin{equation}
        h_2(y)-h_2(x) = \int_0^x h_2'(t)dt + \int_x^{y-x}h_2'(t)dt + \int_{y-x}^yh_2'(t)dt - \int_0^x h_2'(t)dt
    \end{equation}
    \begin{equation}
        = \int_{0}^{y-x}h_2'(t)dt - \int_0^x h_2'(t)dt + \int_{y-x}^y h_2'(t)dt .
    \end{equation}
    With a change of variables, we can rewrite this as
    \begin{equation}
        h_2(y)-h_2(x) = h_2(y-x) - \int_0^x [h_2'(t) - h_2'(t+y-x)]dt
    \end{equation}
    \begin{equation}
        \leq h_2(y-x)
    \end{equation}
    since $h_2$ is concave, and so its derivative is decreasing. Now, pick $a,b \in [0,1]$, and set $x = \min\{a,b,1-a,1-b\}$ and $y= x+|a-b|$. Note that this implies that $x < y$ and $x < 1-y$. If $x = a$, $y = a + b - a = b$, and $x < 1-y$ by the construction of $x$. If $x = 1-a$, $y = 1 - a + a - b = 1 - b$, so again $x < 1-y$. The same works by symmetry if $x = b$ or $1-b$. Then because $x<y$ and $x < 1-y$ we have $h(y) \geq h(x)$,
    \begin{equation}
        |h_2(a)-h_2(b)| = h_2(y)-h_2(x) \leq h_2(y-x)
    \end{equation}
    by above, and then because $y > x$, finally
    \begin{equation}
       |h_2(a)-h_2(b)| \leq h_2(|b-a|)
    \end{equation}
    completing the proof.
\end{proof}

We will first establish the following limit:
\begin{equation}
    \lim_{n\tti} \frac{H(\graph_n) - H(\graph_n | \mathcal{Z})}{\binom{n}{2}s(n)^d} = 0
\end{equation}
then use this fact and the triangle inequality to prove convergence to a constant. We first note that
\begin{equation}
    H(\graph_n) \geq H(\graph_n|\mathcal{Z}_n)
\end{equation}
since conditioning reduces entropy. Let $\ell > 0$ be the side length of the largest $d$-cube completely containing $\mathcal{K}$, let $m \in \mathbb{N}$, and cover $\mathcal{K}$ with disjoint, equally sized $d$-cubes of side length $\ell/m$, of which we will need $\eta \leq m^d$. Then, for randomly distributed points $Z_1,...,Z_n \in \mathcal{K}$, denote by $M = (M_1,...,M_n)$ the random vector of boxes that each node falls into. Then, it is clear that
\begin{equation}
    H(\graph_n) \leq H(\graph_n|M) + H(M) \leq H(\graph_n | M) + nd\log m
\end{equation}
where the second inequality follows because the nodes are uniformly distributed on $\mathcal{K}$. This means that the entropy of the distribution of the box vector is upper bounded by the joint entropy of the $n$ independent, uniformly distribution on $m^d$ variables, which is $nd\log m$. So, the joint entropy of $n$ uniformly distributed random variables which take on any of $m^d$ values is simply $nd\log m$. Next, we can use the independence bound to give
\begin{equation}
    H(\graph_n|M) \leq \binom{n}{2} H(X_{12} | M_1, M_2).
\end{equation}
Now, define the following graphon,
\begin{equation}
    W_n^{(m)}(x,y) := \begin{cases} p_n(\|x-y\|) & \text{if } (x,y) \in  \mathcal{K}^2 \\ 0 & \text{otherwise} \end{cases}
\end{equation}
and define the following 
\begin{equation}
    w_n^{(m)}(k,l) := \mathbb{E}[W_n^{(m)}(Z_1, Z_2) | Z_1 \in M_k \cap \mathcal{K}, Z_2 \in M_l \cap \mathcal{K}]
\end{equation}
\begin{equation}
    \label{eq:nasty_denominator}
    = \frac{\int\int_{M_1 \times M_2} W_n^{(m)}(x,y)dxdy}{\lambda(M_k \cap \mathcal{K})\lambda(M_l \cap \mathcal{K})},
\end{equation}
where $\lambda(\cdot)$ denotes the Lebesgue measure, and for $x \in M_k$, $y \in M_l$ define
\begin{equation}
    \overline{W}_n^{(m)}(x,y) := w_n^{(m)}(k,l).
\end{equation}
This gives
\begin{equation}
    \overline{W}_n^{(m)}(x,y) = \mathbb{E}[W_n^{(m)}(Z_1, Z_2) | M_1\cap \mathcal{K}, M_2 \cap \mathcal{K}],
\end{equation}
and so
\begin{equation}
    \prob{X_{12}=1 | M_1 = m_k, M_2 = m_l} = w_n^{(m)}(k,l)
\end{equation}
and
\begin{equation}
    \mathbb{E}[H(X_{12} | M_1, M_2)] = \sum_{k,l=1}^{\eta} \prob{M_1 = m_k, M_2 = m_l} H(X_{12} | M_1 = m_k, M_2=m_l)
\end{equation}
\begin{equation}
    = \sum_{k,l=1}^{\eta} \int\int_{M_k \times M_l}dxdy \text{ } h_2(w_n^{(m)}(k,l))
\end{equation}
\begin{equation}
    = \sum_{k,l=1}^{\eta} \int\int_{M_k \times M_l} h_2(\overline{W}_n^{(m)}(x,y))dxdy= \int\int_{\mathcal{K}^2} h_2(\overline{W}_n^{(m)}(x,y))dxdy.
\end{equation}
since $\mathcal{K}$ has unit volume. Then, we get
\begin{equation}
    H(\graph_n|\mathcal{Z}_n)\leq H(\graph_n) \leq nd\log m + H(\graph_n | M)
\end{equation}
\begin{equation}
    \iff \int\int_{\mathcal{K}^2} h_2(W_n(x,y)) dxdy \leq \frac{H(\graph_n)}{\binom{n}{2}} \leq \frac{nd\log m}{\binom{n}{2}} + \int\int_{\mathcal{K}^2} h_2(\overline{W}_n^{(m)}(x,y))dxdy 
\end{equation}
\begin{align}
    \label{eq:entropy_diff_bound}
    \iff \frac{H(\graph_n) - H(\graph_n|\mathcal{Z}_n)}{\binom{n}{2}} \leq &\space \frac{nd\log m}{\binom{n}{2}} \nonumber \\ &+ \int\int_{\mathcal{K}^2}h_2(\overline{W}_n^{(m)}(x,y))dxdy  - \int\int_{\mathcal{K}^2} h_2(W_n(x,y)) dxdy.
\end{align}
So the quantity to bound is the difference of the two integrals in \eqref{eq:entropy_diff_bound}:
\begin{equation}
    0 \leq \int\int_{\mathcal{K}^2}h_2(\overline{W}_n^{(m)}(x,y))dxdy  - \int\int_{\mathcal{K}^2} h_2(W_n(x,y)) dxdy
\end{equation}
\begin{equation}
    \leq \int \int_{\mathcal{K}^2} \left|h_2(\overline{W}_n^{(m)}(x,y)) - h_2(W_n(x,y))\right|dxdy 
\end{equation}
\begin{equation}
\label{eq:int_to_bound}
    \leq \int \int_{\mathcal{K}^2} h_2(|\overline{W}_n^{(m)}(x,y) - W_n(x,y)|)dxdy. 
\end{equation}
where we have used Lemma \ref{thm:h2_bound}. Next, observe that
\begin{equation}
    \mathcal{K} \subseteq \bar{\mathcal{K}}_m \cup \delta \mathcal{K}_m
\end{equation}
where $\bar{K}_m$ is the set of boxes completely contained within $\mathcal{K}$, and $\delta \mathcal{K}_m$ is the set of boxes that overlap the boundary of $\mathcal{K}$. Then,
\begin{align}
    \int \int_{\mathcal{K}^2} h_2(|\overline{W}_n^{(m)}(x,y) - W_n(x,y)|)dxdy \leq \int \int_{\bar{\mathcal{K}}_m^2} h_2(|\overline{W}_n^{(m)}(x,y) - W_n(x,y)|)dxdy \nonumber \\
    + 2\int \int_{\bar{\mathcal{K}}_m \times \delta \mathcal{K}_m} h_2(|\overline{W}_n^{(m)}(x,y) - W_n(x,y)|)dxdy + \int \int_{\delta \mathcal{K}_m^2} h_2(|\overline{W}_n^{(m)}(x,y) - W_n(x,y)|)dxdy.
\end{align}
Using the H\"older property of the graphon $W_n$, we have for $(x,y) \in \mathcal{K}^2$
\begin{equation}
    |W_n(x,y) - W_n^{(m)}(x,y)| \leq \sup_{(x,y),(x',y') \in \mathcal{K}^2} |W_n(x,y) -W_n(x',y')|
\end{equation}
\begin{equation}
    \leq \sup_{(x,y), (x',y') \in \mathcal{K}^2} L\left(\frac{\|(x,y)-(x',y')\|}{s(n)}\right)^\alpha
    \leq L\left(\frac{\ell\sqrt{2d}}{s(n)m}\right)^\alpha.
\end{equation}
Next, with $e \approx 2.718...$ as Euler's number, write
\begin{equation}
    h_2(p) = -p\log p + \frac{(1-p)}{\log_e(2)} \sum_{k=1}^\infty \frac{p^k}{k}
\end{equation}
\begin{equation}
    \leq -p\log p + \frac{p}{\log_e(2)} - \sum_{k=2}^\infty \left(\frac{1}{k-1}-\frac{1}{k}\right) p^k \leq \frac{p}{\log_e(2)} - p\log p = p\log(e/p)
\end{equation}
 and note that $p\log(p/e)$ is an increasing function of $p$ on $[0,1]$. Now, we will need $m$ to be an increasing function of $n$ such that $ms(n) \to\infty$, so, choose $n$ large enough that $s(n)m \geq L^{\frac{1}{\alpha}}\ell{\sqrt{2d}}$ (so that the argument of $p$ in $p\log(p/e)$ is between 0 and 1). Then we have
\begin{align}
    \sum_{(m_1,m_2) \in \bar{\mathcal{K}}_m^2}\int \int_{m_1 \times m_2} h_2(|\overline{W}_n^{(m)}(x,y) - W_n(x,y)|)dxdy  \nonumber \\
    \leq \sum_{(m_1,m_2) \in \bar{\mathcal{K}}_m^2}\int \int_{m_1 \times m_2} L\left(\frac{\ell\sqrt{2d}}{s(n)m}\right)^\alpha \log \left(\frac{e}{L}\left(\frac{s(n)m}{\ell\sqrt{2d}}\right)^\alpha\right)  dxdy  \nonumber \\
\end{align}
\begin{equation}
    =|\bar{\mathcal{K}}_m|^2\frac{ \ell^{2d}}{m^{2d}} \left(L\left(\frac{\ell\sqrt{2d}}{s(n)m}\right)^\alpha \log \left(\frac{e}{L}\left(\frac{s(n)m}{\ell\sqrt{2d}}\right)^\alpha\right)\right)
\end{equation}
\begin{equation}
    \leq \ell^{2d+\alpha}L\left(\frac{\sqrt{2d}}{s(n)m}\right)^\alpha \log \left(\frac{e}{L}\left(\frac{s(n)m}{\ell\sqrt{2d}}\right)^\alpha\right) 
\end{equation}
where the last inequality follows because $|\bar{\mathcal{K}}_m|^2 \leq m^{2d}$. Next, we must bound the sum over $\bar{\mathcal{K}}_m \times \delta \mathcal{K}_m$ and $\delta\mathcal{K}_m^{2}$. Let $|\delta\mathcal{K}|$ be the surface area of $\mathcal{K}$, which we have assumed to be finite. Then, since the side length of each box is $(\ell/m)^{d-1}$, we require at most $O(|\delta\mathcal{K}|(m/\ell)^{d-1})$ boxes to cover the boundary of $\mathcal{K}$, and so $|\delta\mathcal{K}_m \times \bar{\mathcal{K}}_m| = O(m^{2d-1})$, and $|\delta\mathcal{K}_m^2| = O(m^{2d-2})$, so 
\begin{equation}
    \sum_{(m_1,m_2) \in \delta\mathcal{K}_m \times \bar{\mathcal{K}}_m} \int\int_{m_1 \times m_2} h_2(|\overline{W}_n^{(m)}(x,y) - W_n(x,y)|)dxdy  \leq m^{-2d}O(m^{2d-1}) = O(m^{-1}),
\end{equation}
and similarly, the sum over $\delta\mathcal{K}_m^2$ is $O(m^{-2})$. So, combining all of the bounds gives that
\begin{align}
    \frac{H(\graph_n)-H(\graph_n|\mathcal{Z}_n)}{\binom{n}{2}s(n)^d} &\leq \frac{nd\log m}{\binom{n}{2}s(n)^d}  \nonumber \\ &+   \frac{\ell^{2d+\alpha}L\left(\sqrt{2d}\right)^\alpha}{s(n)^{d+\alpha}m^\alpha} \log \left(\frac{e}{L}\left(\frac{s(n)m}{\ell\sqrt{2d}}\right)^\alpha\right)  +O\left(\frac{1}{ms(n)^d}\right).
\end{align}
Therefore, we must choose $m = m(n)$ so that
\begin{equation}
    \label{eq:rate1}
    \frac{\log(s(n)m)}{s(n)^{d+\alpha}m^\alpha} \to 0, 
\end{equation}
\begin{equation}
\label{eq:rate2}
    ms(n)^d \to \infty
\end{equation}
and 
\begin{equation}
\label{eq:rate3}
    \frac{n \log m}{\binom{n}{2}s(n)^d} \to 0
\end{equation}
This gives
\begin{equation}
    \label{eq:to_zero}
    \frac{1}{\binom{n}{2}s(n)^d}|H(\graph_n)-H(\graph_n|\mathcal{Z}_n)| \to 0
\end{equation}
as $n\tti$. We note that \eqref{eq:rate1} - \eqref{eq:rate3} are possible exactly in the regime where $ns(n)^d/\log(n) \to \infty$, but impossible if $ns(n)^d/\log(n) \nrightarrow \infty$, since \eqref{eq:rate3} is then not satisfied. To finish the proof, we must show that $H(\graph_n|\mathcal{Z}_n)/(\binom{n}{2}s(n)^d)$ exists and tends to a non-zero limit. We first note that
\begin{equation}
    \frac{H(\graph_n |\mathcal{Z}_n)}{\binom{n}{2}s(n)^d} = \int\int_{\mathcal{K}^2} h_2(p_n(\|x-y\|))dxdy
\end{equation}
depends on $n$ only through $s(n)$. Therefore, we can use the following result from \cite{baker2026entropy}, which deals with the dense regime in the limit of small connection range, and which we have adapted slightly for our setting. Specifically, \cite[Theorem 1]{baker2026entropy} deals with one explicit connection function, but discusses the general methodology in the subsequent discussion. They note that for general connection functions that satisfy $h^* < \infty$, the limit is given as stated below.
\begin{lemma}[Theorem 1 from \cite{baker2026entropy}]
    Let $\mathcal{K}$ have a non-empty interior, and a boundary that is piecewise smooth, is of finite length, and has a finite number of corners. Then, if $\graph_n$ is the ensemble of SRGGs with a connection function $p(r/r_0)$ (independent of $n$), such that $h^* < \infty$,
    \begin{equation}
        \lim_{r_0 \to 0}\frac{1}{r_0^d} \int\int_{\mathcal{K}^2} h_2(p(\|x-y\|/r_0)) = h^*.
    \end{equation}
\end{lemma}
To apply this to our setting, note that
\begin{equation}
    \frac{H(\graph_n|\mathcal{Z}_n)}{\binom{n}{2}} = H(X_{12} | \mathcal{Z}_n) = \int\int_{\mathcal{K}^2} h_2(p(\|x-y\|/s(n)))dxdy
\end{equation}
This integral depends on $n$ only through $s(n)$. Thus, we can write
\begin{equation}
    \lim_{n\tti} \frac{H(\graph_n)}{\binom{n}{2}s(n)^d} =\lim_{n\tti}  \frac{H(\graph_n) - H(\graph_n|\mathcal{Z}_n)}{\binom{n}{2}s(n)^d} + \lim_{n\tti} \frac{H(\graph_n|\mathcal{Z}_n)}{\binom{n}{2}s(n)^d}
\end{equation}
\begin{equation}
    = 0 + \lim_{r_0 \to 0} \frac{1}{r_0^d} \int\int_{\mathcal{K}^2} h_2(p(\|x-y\|/r_0))dxdy = h^*
\end{equation}
where we have replaced $s(n)$ by $r_0$, and used \eqref{eq:to_zero}. \qed

 \section{Proof of Theorem \ref{thm:aep}}
\label{pf:aep}

\begin{lemma}[Conditional AEP]
    \label{thm:conditional_aep}
    Let $\{(G_n,Z_n)\}_{n=2}^\infty$ be a sequence of pairs of random sparse SRGGs and the node locations that generated it. Then,
    \begin{equation}
        s(n)^{-d}\binom{n}{2}^{-1}\left(-\log \prob{G_n|Z_n} - H(\graph_n|\mathcal{Z}_n)\right) \to 0
    \end{equation}
    almost surely as $n\tti$.
\end{lemma}

\begin{proof}
    For convergence in probability, all we need is
    \begin{equation}
        \label{eq:unconditional_target}
        \mathbb{P}_{G_n,Z_n}\left(\frac{1}{\binom{n}{2}s(n)^d}\left|-\log \mathbb{P}(G_n|Z_n) - H(\graph_n|\mathcal{Z}_n)\right| > \epsilon\right) \to 0
    \end{equation}
    as $n\tti$ for every $\epsilon > 0$. This will follow from a slightly modified law of large numbers. First, we note that, conditioned on $Z_n$ each edge is independent, and so we have
    \begin{equation}
        -\log \prob{G_n|Z_n} = -\sum_{i<j} (X_{ij}\log p_n(\|Z_i-Z_j\|) + (1-X_{ij})\log(1-p_n(\|Z_i-Z_j\|)))
    \end{equation}
    \begin{equation}
        =: \sum_{i<j} Y_{ij}
    \end{equation}
    and
    \begin{equation}
        H(\graph_n|\mathcal{Z}_n) = \binom{n}{2}\mathbb{E}_{Z_n}[h_2(p_n(\|Z_i-Z_j\|))] = \sum_{i<j} \mathbb{E}[Y_{ij}]
    \end{equation}
    Therefore rewriting \eqref{eq:unconditional_target}, our target is now
    \begin{equation}
        \frac{1}{\binom{n}{2}s(n)^d}\left|\sum_{i<j} Y_{ij} - \mathbb{E}[Y_{ij}]\right| \to 0
    \end{equation}
    in probability as $n\tti$. This follows immediately from the Kolmogorov strong law of large numbers, if the variance of $Y_{ij}$ is finite. This can be seen by writing
    \begin{equation}
        \label{eq:cheb_bound}
        \mathbb{P}_{G_n,Z_n}\left(\frac{1}{\binom{n}{2}s(n)^d}\left|\sum_{i<j} Y_{ij} - \mathbb{E}[Y_{ij}]\right| > \epsilon\right)
        \leq \frac{\binom{n}{2}\Var(Y_{ij})}{\binom{n}{2}^2s(n)^{2d}\epsilon^2}
    \end{equation}
    where we have used Chebyshev's inequality, and the independence of each $Y_{ij}$. We must now verify the finiteness of $\Var(Y_{ij})$. Using the law of total variance,
    \begin{equation}
        \Var(Y_{ij}) = \Var(\mathbb{E}[Y_{ij}|Z]) + \mathbb{E}[\Var(Y_{ij}|Z)]
    \end{equation}
    \begin{align}
        \label{eq:variance_expanded}
        = \left[\int\int _{\Omega^2} h_2(p_n(\|Z_i-Z_j\|))dZ_idZ_j \right]^2 - \int\int_{\Omega^2} (h_2(p_n(\|Z_i-Z_j\|))^2dZ_idZ_j \nonumber \\ 
    + \int\int_{\Omega^2}p_n(\|Z_i-Z_j\|)(1-p_n(\|Z_i-Z_j\|))\log^2\left(\frac{p_n(\|Z_i-Z_j\|)}{1-p_n(\|Z_i-Z_j\|)}\right)dZ_idZ_j
    \end{align}
    We will consider the asymptotics of each term in turn. The simplest is the first term. We can use Theorem \ref{thm:limit} to see that
    \begin{equation}
        \int\int_{\Omega^2} h_2(p_n(\|Z_i-Z_j\|)) dZ_idZ_j = \Theta(s(n)^d)
    \end{equation}
    to immediately give
    \begin{equation}
        \left[\int\int_{\Omega^2} h_2(p_n(\|Z_i-Z_j\|)) dZ_idZ_j\right]^2 = \Theta(s(n)^{2d})
    \end{equation}
    Now, note that the second term of \eqref{eq:variance_expanded} is negative and therefore trivially bounded above by 0. After this, we have to deal with
    \begin{equation}
        \int\int_{\Omega^2} p_n(\|Z_i-Z_j\|)(1-p_n(\|Z_i-Z_j\|))\log^2\left(\frac{p_n(\|Z_i-Z_j\|)}{1-p_n(\|Z_i-Z_j\|)}\right) dZ_idZ_j
    \end{equation}
    \begin{equation}
        = \int_0^D f_{\mathcal{K}}(r)I(r/s(n))dr
    \end{equation}
    where for brevity we have defined $I(r/s(n)) := p(r/s(n))(1-p(r/s(n)))\log^2\left(\frac{p(r/s(n))}{1-p(r/s(n))}\right)$. Then
    \begin{equation}
        \int_0^D f_{\mathcal{K}}(r)I(r/s(n))dr = \int_0^{\sqrt{s(n)}} f_{\mathcal{K}}(r)I(r/s(n))dr + \int_{\sqrt{s(n)}}^D f_{\mathcal{K}}(r)I(r/s(n))dr
    \end{equation}
    To control the second term, 
    \begin{equation}
        \int_{\sqrt{s(n)}}^D f_{\mathcal{K}}(r)I(r/s(n))dr \leq \sup_{\sqrt{s(n)} < r < D} |I(r/s(n))| \prob{R > \sqrt{s(n)}}
    \end{equation}
    \begin{equation}
        \leq \sup_{\sqrt{s(n)}<r<D} \left[p(r/s(n))(1-p(r/s(n)))\log^2\left(\frac{p(r/s(n))}{1-p(r/s(n))}\right)\right]
    \end{equation}
    which tends to 0 as $n\tti$. We need the stronger condition that this is $O(s(n)^d)$ as $n\tti$ however. So, since $p$ is a decreasing function, note that for $n$ large enough that $p(1/\sqrt{s(n)}) < \frac{1}{e}\log e$ (where the function $p \mapsto p(1-p)\log^2(p/(1-p))$ has its maximum), we have
    \begin{equation}
        \sup_{t > \frac{1}{\sqrt{s(n)}}}\left[p(t)(1-p(t))\log^2\left(\frac{p(t)}{1-p(t)}\right)\right] = p(1/\sqrt{s(n)})(1-p(1/\sqrt{s(n)}))\log^2 \left(\frac{p(1/\sqrt{s(n)})}{1-p(1/\sqrt{s(n)})}\right)
    \end{equation}
    and so we require $I(1/\sqrt{s(n)}) \ll s(n)^d$, or equivalently $r^{\frac{d-1}{2}}I(r) \ll 1$ as $r\tti$. This is implied by the finiteness of the integral $\int_0^\infty t^{d-1}I(t)dt$. Therefore we have
    \begin{equation}
        \int_{\frac{1}{\sqrt{s(n)}}}^\infty f_{\mathcal{K}}(r)I(r/s(n))dr = o(s(n)^d)
    \end{equation}
    as $n\tti$. For the first term, note
    \begin{equation}
        \int_0^{\sqrt{s(n)}} f_{\mathcal{K}}(r)I(r/s(n))dr = s(n) \int_0^{\frac{1}{\sqrt{s(n)}}} f_{\mathcal{K}}(ts(n)) I(t)dt
    \end{equation}
    \begin{equation}
        = s(n)\int_0^{\frac{1}{\sqrt{s(n)}}} \kappa_{d-1}I(t)s(n)^{d-1}t^{d-1}(1+o(1))dt
    \end{equation}
    \begin{equation}
        = \kappa_{d-1}s(n)^d \int_0^{\frac{1}{\sqrt{s(n)}}} t^{d-1}I(t)dt + o(s(n)^d) \leq \kappa_{d-1}s(n)^d \int_0^\infty t^{d-1}I(t)dt + o(s(n)^d),
    \end{equation}
    which is $\Theta(s(n)^d)$ if $\int_0^\infty t^{d-1}I(t)dt < \infty$ (Assumption \eqref{ass:var_2}). Combining the above gives
    \begin{equation}
        \Var(Y_{ij}) = \Theta(s(n)^d),
    \end{equation}
    and so the Chebyshev bound \eqref{eq:cheb_bound} satisfies
    \begin{equation}
        \frac{\binom{n}{2}\Var(Y_{ij})}{\binom{n}{2}^2s(n)^{2d}\epsilon^2} = \mathcal{O}\left(\left(\binom{n}{2}s(n)^d\right)^{-1}\right)
    \end{equation}
    which goes to $0$ as $n\tti$ by assumption that $ns(n)^d \tti$. Therefore, the probability in \eqref{eq:unconditional_target} goes to 0 and we are done.
\end{proof}

We are now ready to prove the result of Theorem \ref{thm:aep}, which we note is equivalent to 
\begin{equation}
    s(n)^{-d}\binom{n}{2}^{-1}\left(-\log \prob{G_n} - \binom{n}{2}s(n)^dh^*\right) \to 0
\end{equation}
in probability as $n\tti$.

\begin{proof}[Proof of Theorem \ref{thm:aep}]
    We use the triangle inequality to give
    \begin{equation}
        \label{eq:aep_target_to_bound}
        |-\log \prob{G_n} - H(\graph_n)| \leq |H(\graph_n) - H(\graph_n|\mathcal{Z}_n)| + |-\log \prob{G_n} - H(\graph_n | \mathcal{Z}_n)| 
    \end{equation}
    Here, we know the first term is $o(n^2s(n)^d)$ by Theorem \ref{thm:limit}. For the second term, recall that for convergence in probability we require
    \begin{equation}
        \prob{|-\log \prob{G_n} - H(\graph_n|\mathcal{Z}_n)| > \binom{n}{2}s(n)^d\epsilon} \to 0
    \end{equation}
    for all $\epsilon >0$. Indeed by Markov's inequality, followed by the tower law, and triangle inequality,
    \begin{align}
    \prob{|-\log \prob{G_n} - H(\graph_n|\mathcal{Z}_n)| > \binom{n}{2}s(n)^d\epsilon} \leq \frac{\mathbb{E}_{G_n}[|-\log \prob{G_n} - H(\graph_n|\mathcal{Z}_n)|]}{\binom{n}{2}s(n)^d\epsilon} \nonumber \\
    = \frac{\mathbb{E}_{G_n,Z_n}[|-\log \prob{G_n} - H(\graph_n|\mathcal{Z}_n)| | Z_n]}{\binom{n}{2}s(n)^d\epsilon} \nonumber \\ 
    \label{eq:split_bound}
    \leq \frac{\mathbb{E}_{G_n,Z_n}[|-\log \prob{G_n|Z_n} - H(\graph_n|\mathcal{Z}_n)|]}{\binom{n}{2}s(n)^d\epsilon} + \frac{\mathbb{E}_{G_n,Z_n}[|\log \prob{G_n|Z_n} - \log \prob{G_n}|]}{\binom{n}{2}s(n)^d\epsilon}
    \end{align}
    We will show the convergence of each of these terms separately. For the first term, we can use the fact that convergence in probability can be upgraded to convergence in expectation if the family of random variables is uniformly integrable. De la Vall\'ee Poussin's criterion gives that if
\begin{equation}
    \sup_{n\geq 2} \frac{\mathbb{E}[(-\log \prob{G_n|Z_n} - H(\graph_n|\mathcal{Z}_n))^2]}{\binom{n}{2}^2s(n)^{2d}} < \infty,
\end{equation}
and $\plim_{n\tti} (\binom{n}{2}s(n)^d)^{-1}(-\log \prob{G_n|Z_n} - H(\graph_n|\mathcal{Z}_n))=0$ then 
\begin{equation}
    \frac{\mathbb{E}[|-\log \prob{G_n|Z_n) - H(\graph_n|\mathcal{Z}_n)|}]}{\binom{n}{2}^2s(n)^{2d}} \to 0.
\end{equation}
But, this condition is just the finite variance condition we proved in the proof of Lemma \ref{thm:conditional_aep}, combined with Lemma \ref{thm:conditional_aep} itself. Therefore it remains to bound the second term of \eqref{eq:split_bound}. To do this, define
\begin{equation}
    r_n = r(G_n, Z_n) = \frac{\prob{G_n|Z_n}}{\prob{G_n}}.
\end{equation}
Then, we have
\begin{equation}
    \mathbb{E}_{G_n,Z_n}[|\log \prob{G_n|Z_n} - \log \prob{G_n}|] = \mathbb{E}_{G_n,Z_n}[|\log r_n|]
\end{equation}
\begin{equation}
    = \mathbb{E}_{Z_n}\left[\sum_{G_n \in \graph_n} \prob{G_n|Z_n} |\log r_n|\right]
\end{equation}
\begin{equation}
    = \mathbb{E}_{Z_n}\left[\sum_{G_n \in \graph_n} \prob{G_n} |r_n \log r_n| \right]
\end{equation}
Then, we make the following two observations. First,
\begin{equation}
    |x| = x - 2x\mathds{1}(x < 0)
\end{equation}
and second, for $x \in (0,1)$
\begin{equation}
    0 < -x\log x < \frac{1}{e}\log e
\end{equation}
And therefore,
\begin{equation}
    \mathbb{E}_{G_n,Z_n}[|\log r_n|] \leq \mathbb{E}_{Z_n}\left[\sum_{G_n \in \graph_n} \prob{G_n} r_n \log r_n \right] - 2\mathbb{E}_{Z_n}\left[\sum_{G_n \in \graph_n} \prob{G_n} r_n \log r_n \Bigg| r_n < 1 \right]
\end{equation}
\begin{equation}
    \leq \mathbb{E}_{Z_n}\left[\sum_{G_n \in \graph_n} \prob{G_n} r_n \log r_n \right] + \frac{2}{e}\log(e) \mathbb{E}_Z\left[\sum_{G_n \in \graph_n} \prob{G_n} \right]
\end{equation}
\begin{equation}
    = \mathbb{E}_{G_n,Z_n}\left[\log \frac{\prob{G_n|Z_n}}{\prob{G_n}}\right] + \frac{2}{e}\log e
\end{equation}
\begin{equation}
    = H(\graph_n) - H(\graph_n|\mathcal{Z}_n) + \frac{2}{e}\log e = o(n^2s(n)^d)
\end{equation}
Then using this fact, we have that
\begin{equation*}
    \prob{|-\log \prob{G_n} - H(\graph_n|\mathcal{Z}_n)| > \binom{n}{2}s(n)^d\epsilon} \leq \frac{o(n^2s(n)^d)}{\binom{n}{2}s(n)^d\epsilon} \to 0
\end{equation*}
as $n\tti$. Therefore, combining this with the bound in \eqref{eq:aep_target_to_bound}, we have
\begin{equation}
    \prob{|-\log \prob{G_n} - H(\graph_n)| > \binom{n}{2}s(n)^d\epsilon} \to 0
\end{equation}
To complete the result, we write
\begin{align}
    &\plim_{n\tti} \frac{1}{\binom{n}{2}s(n)^d}\log\frac{1}{\prob{G_n}} \nonumber \\ &= \plim_{n\tti} \frac{1}{\binom{n}{2}s(n)^d} \left(\log \frac{1}{\prob{G_n}} - H(\graph_n)\right) + \lim_{n\tti} \frac{1}{\binom{n}{2}s(n)^d}H(\graph_n) \nonumber \\&= 0+h^*=h^*
\end{align}
and we are done. For the second part of the Theorem, properties 1) and 2) follow immediately by rearranging the condition inside the probability in the statement of the Theorem. Property 3) follows by the following observations:
\begin{equation}
    1 \geq \sum_{G_n \in \mathcal{T}_n^\epsilon} \prob{G_n} \geq \sum_{G_n \in \mathcal{T}_n^\epsilon} 2^{-\binom{n}{2}s(n)^d(h^*+\epsilon)} = |\mathcal{T}_n^\epsilon|2^{-\binom{n}{2}s(n)^d(h^*+\epsilon)}
\end{equation}
\begin{equation}
    \Rightarrow |\mathcal{T}_n^\epsilon| \leq 2^{\binom{n}{2}s(n)^d(h^*+\epsilon)}
\end{equation}
and, for $n$ sufficiently large, $\mathbb{P}(G_n \in \mathcal{T}_n^\epsilon) > 1-\delta$, so
\begin{equation}
    1-\delta \leq \sum_{G_n \in \mathcal{T}_n^\epsilon} 2^{-\binom{n}{2}s(n)^d(h^*-\epsilon)} = |\mathcal{T}_n^\epsilon|2^{-\binom{n}{2}s(n)^d(h^*-\epsilon)} 
\end{equation}
    which yields the lower bound.
\end{proof}

\section{Proof of Theorem \ref{thm:partial_dsc}}

After establishing the AEP, we can now turn to distributed coding. The proof of Theorem \ref{thm:partial_dsc} relies on the following observation.

\begin{lemma}
    \label{thm:nbhd_reduction}
    Let $S \subseteq V_n$, then
    \begin{equation}
        H(N_{S}|N_{S^c}) = H(\graph_n[S])
    \end{equation}
    where $\graph_n[S]$ is the ensemble of subgraphs of $G_n[S]$ of $G_n$ restricted to node set $S$.
\end{lemma}
\begin{proof}
    Without loss of generality, we may assume that $S = \{1,...,|S|\}$. Then, writing out the neighbourhood entropy in full, we have
    \begin{equation}
        H(N_{S}|N_{S^c}) = H(N_1,...,N_{|S|}|N_{|S|+1},...,N_n)
    \end{equation}
    The collection $(N_1,...,N_{|S|})$ contains all edge variables in the following set
    \begin{equation}
        \Gamma(S) := \{X_{ij} | i \leq |S|, j \neq i\}
    \end{equation}
    and similarly, $(N_{|S|+1},...,N_n)$ contains edge variables in
    \begin{equation}
        \Gamma(S^c) := \{X_{ij} | i > |S|, j \neq i\}
    \end{equation}
    So $H(N_{S}|N_{S^c}) = H(\Gamma(S)|\Gamma(S^c))$. Therefore, if any variable $X_{ij}$ is in both $\Gamma(S) \cap \Gamma(S^c)$, its information is already accounted for, and does not contributed to the entropy. So, 
    \begin{equation}
        \Gamma(S) \cap \Gamma(S^c) = \{X_{ij} | i \leq |S| \text{ and } j > |S|, \text{ or } i>|S| \text{ and } j \leq |S|\}
    \end{equation}
    which means that
    \begin{equation}
        H(\Gamma(S)|\Gamma(S^c)) = H(\{X_{ij} | i,j \leq |S|, i \neq j\} | \{X_{ij} | i,j > |S|, i \neq j\})
    \end{equation}
    \begin{equation}
        =H(\{X_{ij} | i,j \leq |S|, i \neq j\}) = H(\graph_n[S])
    \end{equation}
    where the last line follows because no element in in $\{X_{ij} | i,j \leq |S|, i \neq j\}$ shares an edge with an element in $\{X_{ij} | i,j > |S|, i \neq j\}$, and so the collections are completely independent.
\end{proof}

\begin{lemma}[Neighbourhood AEPs]
    \label{thm:nbhd_aep}
    Let $\{S_n\}_{n=2}^\infty$ be a sequence of sets satisfying $S_n \subseteq V_n$ for each $n$. If $|S_n| \to \infty$ as $n\tti$ then we have
    \begin{equation}
        \plim_{n\tti} \frac{1}{\binom{|S_n|}{2}s(n)^d} \log \frac{1}{\prob{N_{S_n} | N_{S^c_{n}}}} = h^*
    \end{equation}
    Therefore, for every $\epsilon > 0$ there exist sets defined by
    \begin{equation}
        \mathcal{T}_n^{\epsilon}(S_n) := \left\{(N_{S_n}, N_{-{S_n}}) : \left|\frac{1}{\binom{|S_n|}{2}s(n)^d}\log \frac{1}{\prob{N_{S_n} |N_{-{S_n}}}} - h^*\right| < \epsilon\right\}
    \end{equation}
    for each $n \geq 2$, such that for any sequence $\{S_n\}_{n=2}^\infty$ with $S_n \subseteq V_n$ and $|S_n| \to \infty$, we have
    \begin{enumerate}
        \item $\prob{(N_{S_n},N_{{S_n^c}}) \in \mathcal{T}_n^\epsilon(S_n)} \to 1$
        \item $|\mathcal{T}_n^\epsilon(S_n)| \leq 2^{\binom{|S_n|}{2}s(n)^dh^*+\epsilon}$
    \end{enumerate}
    If $(N_{S_n},N_{S_n^c}) \in \mathcal{T}_n^\epsilon(S_n)$ we say $(N_{S_n},N_{S_n^c})$ is typical.
\end{lemma}
\begin{proof}
    Using the same logic (and notation) as in Lemma \ref{thm:nbhd_reduction}, 
    \begin{align}
        \prob{N_{S_n}|N_{S_n^c}} = \prob{\Gamma(S_n) | \Gamma(S_n^c)} \nonumber \\
        =\prob{\{X_{ij} | i,j \leq |S_n|, i \neq j\} | \{X_{ij} | i,j > |S_n|, i \neq j\})} \nonumber\\
        = \prob{\{X_{ij} | i,j \leq |S_n|, i \neq j\}} = \prob{G_n[S_n]}
    \end{align}
    due to the independence of the two sets. Now note that marginally, the distribution of $G_n[S_n]$ is that of an SRGG on node set $S_n$, with sparsity $s(n)$. Therefore, the statement in the Theorem becomes
    \begin{equation}
        \plim_{n\tti} \frac{1}{\binom{|S_n|}{2}s(n)^d}\log \frac{1}{\prob{G_n[S_n]}} = h^*
    \end{equation}
    which is directly true as a consequence of Theorem \ref{thm:aep}, provided that $|S_n| \to \infty$ as $n\tti$. The properties of the typical set follow from the same logic as the end of the proof of Theorem \ref{thm:aep}. 
\end{proof}

Equipped with these results, we are ready to prove Theorem \ref{thm:partial_dsc}.

\begin{proof}[Proof of Theorem \ref{thm:partial_dsc}]
    \textit{1) Achievability: } We can use the classical idea of random binning. Let $L$ be a fixed integer, then we design the codebook as follows for an $L$-tuple of rates $(R_1,...,R_L)$. For each $l = 1,...,L$, we generate assignments by independently assigning each $b_l^{(n)} \in \mathcal{B}_l^{(n)}$ a uniformly random element, which we call its index, from the codebook
    \begin{equation}
        \mathcal{M}_l^{(n)} := \{1,2,3,...,2^{\frac{n}{L}(n-1)s(n)^d R_l}\}
    \end{equation}
    we denote the index of $b_l^{(n)}$ as $\phi_l^{(n)}(b_l^{(n)})$. Note that more than one realisation $b_l^{(n)}$ may be assigned the same index. We reveal these assignments to both the encoder, and the decoder. To encode a graph, each encoder $l$ sees its block, $b_l^{(n)}$. Then, the independent encodings are sent as the tuple $(\phi_1^{(n)}(b_1^{(n)}),...,\phi_L^{(n)}(b_L^{(n)}))$. The decoder receives this tuple, and searches for their own tuple $(\hat{b}_1^{(n)},...,\hat{b}_L^{(n)})$ (or equivalently a graph $\hat{G}_n$), such that $(\hat{b}_1^{(n)},...,\hat{b}_L^{(n)})$ is unique, typical, and $\phi_l^{(n)}(\hat{b}_l^{(n)}) = \phi_l^{(n)}(b_l^{(n)})$ for every $l$. If they cannot find such a tuple, an error is declared. Therefore, an error can arise in one of two ways. First, the event
    \begin{equation}
        \mathcal{E}_1 = \{G_n \notin \mathcal{T}_n^\epsilon\} 
    \end{equation}
    is the event that the graph (or equivalently the $L$-tuple of blocks that the encoder compressed) is atypical. For a set $\Lambda \subseteq [L] := \{1,2,...,L\}$, let $B_{\Lambda}^{(n)} = \bigcup_{l\in \Lambda} B_l^{(n)}$. Then, with a slight abuse of notation, define the typical set $\mathcal{T}_n^\epsilon(\Lambda)$ to be the typical set in the statement of Theorem \ref{thm:nbhd_aep}, that is,
    \begin{equation}
        \mathcal{T}_n^\epsilon(\Lambda) = \mathcal{T}_n^\epsilon(\{i : (l-1)n/L < i \leq ln/L \text{ for some }l \in \Lambda\}).
    \end{equation}
    Then the second error event is
    \begin{equation}
        \mathcal{E}_2 := \bigcup_{\Lambda \subseteq [L]}\left\{\exists \hat{b}_\Lambda^{(n)} \neq B_\Lambda^{(n)} \space \text{ s.t. } (\hat{b}_\Lambda^{(n)}, B^{(n)}_{\Lambda^c}) \in \mathcal{T}_n^\epsilon(\Lambda) \text{ and } \phi_l^{(n)}(\hat{b}_l^{(n)}) = \phi_l^{(n)}(B_l^{(n)})\space \forall l \in \Lambda\right\}
    \end{equation}
    which is the probability that any subset of the blocks causes an error. For $\Lambda\subseteq [L]$ denote by $\mathcal{E}_2(\Lambda)$ the event corresponding to the error in the above union on the set $\Lambda$. Then by the union bound we have
    \begin{equation}
        P_E^{(n)} \leq \prob{\mathcal{E}_1} + \sum_{\Lambda \subseteq [L]} \prob{\mathcal{E}_2(\Lambda)}
    \end{equation}
    Clearly $\prob{\mathcal{E}_1} \to 0$ as a consequence of Theorem \ref{thm:aep}. Then since the sum is finite, we only need to demonstrate that the probability of error on an arbitrary subset goes to 0. Indeed, let $\Lambda \subseteq [L]$, then writing $\prob{\mathcal{E}_2(\Lambda)}$ as an expectation over the pair $(B^{(n)}_\Lambda, B^{(n)}_{\Lambda^c})$,
    \begin{align}
        \prob{\mathcal{E}_2(\Lambda)} = \mathbb{E}\big[\mathbb{P}\big(&\exists \hat{b}_\Lambda^{(n)}\neq B_\Lambda^{(n)} \space \text{ s.t. } (\hat{b}_\Lambda^{(n)}, B_{\Lambda^c}^{(n)}) \in \mathcal{T}_n^\epsilon(\Lambda) \nonumber \\ &\text{ and } \phi_l^{(n)}(\hat{b}_l^{(n)}) = \phi_l^{(n)}(B_l^{(n)})\space \forall l \in \Lambda\big)\big] \nonumber
    \end{align}
    \begin{align}
        = \mathbb{E}\big[\mathbb{P}\big(\exists \hat{b}_\Lambda^{(n)}\neq B_\Lambda^{(n)} \space \text{ s.t. } \phi_l^{(n)}(\hat{b}_l^{(n)}) = \phi_l^{(n)}(B_l^{(n)})\space \forall l \in \Lambda \big |& (\hat{b}_\Lambda^{(n)}, B_{\Lambda^c}^{(n)}) \in \mathcal{T}_n^\epsilon(\Lambda)\big)\nonumber \\ &\prob{(\hat{b}_\Lambda^{(n)}, B_{\Lambda^c}^{(n)}) \in \mathcal{T}_n^\epsilon(\Lambda)}\big] \nonumber
    \end{align}
    \begin{align}
        &\leq \mathbb{E}\big[\mathbb{P}\big(\exists \hat{b}_\Lambda^{(n)}\neq B_\Lambda^{(n)} \space \text{ s.t. } \phi_l^{(n)}(\hat{b}_l^{(n)}) = \phi_l^{(n)}(B_l^{(n)})\space \forall l \in \Lambda \big | (\hat{b}_\Lambda^{(n)}, B_{\Lambda^c}^{(n)}) \in \mathcal{T}_n^\epsilon(\Lambda)\big)\big]
    \end{align}
    \begin{align}
        = \sum_{B_{\Lambda}^{(n)}} \prob{B_{\Lambda}^{(n)}, B_{\Lambda^c}^{(n)}} \sum_{\substack{\hat{b}_\Lambda^{(n)} \neq B_{\Lambda}^{(n)} \\ (\hat{b}_\Lambda^{(n)}, B_{\Lambda^c}^{(n)}) \in \mathcal{T}_n^\epsilon(\Lambda)}} \prob{\phi_l^{(n)}(\hat{b}_l^{(n)}) = \phi_l^{(n)}(B_l^{(n)})\space \forall l \in \Lambda}
    \end{align}
    \begin{align}
        \leq \sum_{B_{\Lambda}^{(n)}} \prob{B_{\Lambda}^{(n)}, B_{\Lambda^c}^{(n)}}  2^{-\frac{n}{L}(n-1)s(n)^d \sum_{l\in \Lambda}R_l} |\mathcal{T}_n^\epsilon(\Lambda)|
    \end{align}
    \begin{equation}
        =2^{-\frac{n}{L}(n-1)s(n)^d \sum_{l\in \Lambda}R_l} |\mathcal{T}_n^\epsilon(\Lambda)|    
    \end{equation}
    Then, we can use the bound on the size of the typical set to give
    \begin{equation}
        2^{-\frac{n}{L}(n-1)s(n)^d \sum_{l\in \Lambda}R_l} |\mathcal{T}_n^\epsilon(\Lambda)|  \leq 2^{-\frac{n}{L}(n-1)s(n)^d \sum_{l\in \Lambda}R_l + \binom{v}{2}s(n)^d(h^*+\epsilon)}
    \end{equation}
    where $v = |\Lambda|n/L$ is the number of nodes contained in the blocks $l \in \Lambda$. This goes to 0 if
    \begin{equation}
        \frac{n}{L}(n-1)s(n)^d \sum_{l\in \Lambda}R_l > \binom{v}{2}s(n)^d(h^*+\epsilon)
    \end{equation}
    Dividing and taking the limit, we see that this occurs in the limit $n\tti$ if
    \begin{equation}
        \sum_{l \in \Lambda} R_l \geq \frac{|\Lambda|^2}{L}\frac{h^*}{2} 
    \end{equation}
    which is the condition in the statement of the Theorem. So, if a rate tuple $(\tilde{R}_1,...,\tilde{R}_L)$ satisfies the above condition for every $\Lambda \subseteq [L]$, then under the coding scheme described above we have $P_E^{(n)} \to 0$ as $n\tti$, and the rate tuple is achievable. \\

    \textit{2) Converse} For the converse direction, we use an extended version of \cite[Lemma 7.2.2]{han_book}, which easily follows by replacing the normalisation $1/n$ in their proof with our normalisation $\frac{1}{\frac{n}{L}(n-1)s\left(n\right)^d}$. This gives for every $\gamma > 0$,
    \begin{align}
        P_E^{(n)} \geq \mathbb{P}\bigg(\bigcup_{\Lambda \subseteq [L]} \bigg\{&\frac{1}{\frac{n}{L}(n-1)s\left(n\right)^d}\log \frac{1}{\prob{B_{\Lambda}^{(n)}| B_{\Lambda^c}^{(n)}}} \nonumber \\ &\geq \frac{1}{\frac{n}{L}(n-1)s\left(n\right)^d} \sum_{l\in \Lambda} \log |\mathcal{M}_l^{(n)}| + \gamma\bigg\}\bigg) - \sum_{\Lambda \subseteq [L]} 2^{-\frac{n}{L}(n-1)s\left(n\right)^d \gamma} 
    \end{align}
    So, in particular for any $\Lambda \subseteq [L]$,
    \begin{align}
        \label{eq:p_e_lower}
        P_E^{(n)} \geq \mathbb{P}\bigg(&\frac{1}{\frac{n}{L}(n-1)s\left(n\right)^d}\log \frac{1}{\prob{B_{\Lambda}^{(n)}| B_{\Lambda^c}^{(n)}}} \nonumber \\ &\geq \frac{1}{\frac{n}{L}(n-1)s\left(n\right)^d} \sum_{l\in \Lambda} \log |\mathcal{M}_l^{(n)}| + \gamma\bigg) - 2^{-\frac{n}{L}(n-1)s\left(n\right)^d \gamma} 
    \end{align}
    Now suppose that there exists a code with
    \begin{equation}
        \limsup_{n\tti} \frac{1}{\frac{n}{L}(n-1)s\left(n\right)^d}\sum_{l \in \Lambda} \log |\mathcal{M}_l^{(n)}| \leq \sum_{l\in \Lambda} R_l
    \end{equation}
    for some $\Lambda \subseteq [L]$ such that $P_E^{(n)} \to 0$. Then for any $\gamma >0$ we have
    \begin{equation}
        \frac{1}{\frac{n}{L}(n-1)s\left(n\right)^d}\sum_{l \in \Lambda} \log |\mathcal{M}_l^{(n)}| \leq \sum_{l \in \Lambda} R_l + \gamma
    \end{equation}
    for $n$ sufficiently large. Then \eqref{eq:p_e_lower} is further lower-bounded by
    \begin{equation}
        P_E^{(n)} \geq \mathbb{P}\Bigg(\frac{1}{\frac{n}{L}(n-1)s\left(n\right)^d}\log \frac{1}{\prob{B_{\Lambda}^{(n)}| B_{\Lambda^c}^{(n)}}} \geq \sum_{l \in \Lambda} R_l + \gamma\Bigg) - 2^{-\frac{n}{L}(n-1)s\left(n\right)^d \gamma}
    \end{equation}
    Taking the limit, we get
    \begin{equation}
        \lim_{n\tti} P_E^{(n)} \geq \lim_{n\tti} \mathbb{P}\Bigg(\frac{1}{\frac{n}{L}(n-1)s\left(n\right)^d}\log \frac{1}{\prob{B_{\Lambda}^{(n)}| B_{\Lambda^c}^{(n)}}} \geq \sum_{l \in \Lambda} R_l + \gamma\Bigg)
    \end{equation}
    \begin{equation}
        = \lim_{n\tti} \mathbb{P}\Bigg(\frac{1}{\binom{|\Lambda|n/L}{2}s\left(n\right)^d}\log \frac{1}{\prob{B_{\Lambda}^{(n)}| B_{\Lambda^c}^{(n)}}} \geq \frac{2}{|\Lambda|}\frac{s(n)^d(n-1)}{s\left(n\right)^d\left(\frac{|\Lambda|n}{L}-1\right)}\left(\sum_{l \in \Lambda} R_l + \gamma\right)\Bigg)
    \end{equation}
    \begin{equation}
        =\begin{cases}
            1 & \lim_{n\tti}\frac{2}{|\Lambda|}\frac{s(n)^d(n-1)}{s\left(n\right)^d\left(\frac{|\Lambda|n}{L}-1\right)}\left(\sum_{l \in \Lambda} R_l + \gamma\right) \leq h^* \\
            0 & \text{otherwise}
        \end{cases}
    \end{equation}
    where the last equality follows by the neighbourhood AEP Lemma \ref{thm:nbhd_aep}. Since $\gamma$ was arbitrary, this gives that the rate tuple must satisfy
    \begin{equation}
        \sum_{l \in \Lambda} R_l \geq \frac{|\Lambda|^2}{L}\frac{h^*}{2}
    \end{equation}
    for every subset $\Lambda \subseteq [L]$. Combining this observation with the achievability result completes the proof.
\end{proof}

\section{Numerical Results}

In this section, we plot the Slepian-Wolf rate region for a range of different connection functions. We define the following connection functions, as defined in \cite{dettmann2016random}. They are motivated by wireless communications applications, and all make use of parameters $r_0$, the typical connection range, and $\eta$, the path loss exponent.

\begin{enumerate}
    \item Rayleigh Single Input Single Output (SISO): $p(r) = \exp(-(r/r_0)^{\eta})$.
    \item Single Input Multiple Output / Multiple Input Single Output with $k$ inputs / outputs (SIMO/MISO ($k$)): $p(r)= \Gamma(k, (r/r_0)^\eta)/\Gamma(k)$.
    \item Multiple Input Multiple Output (MIMO (2,2)): $p(r) = \exp(-(r/r_0)^\eta)((r/r_0)^{2\eta} + 2 -\exp(-(r/r_0)^\eta))$.
    \item (Standard) Log-normal: $p(r) = \frac{1}{2}\text{erfc}\left(\frac{10}{\sqrt{2}}\log_{10}((r/r_0)^\eta)\right)$.
\end{enumerate}
Here $\Gamma(\cdot)$ is the Gamma function, $\Gamma(\cdot, \cdot)$ is the upper incomplete Gamma function, and $\text{erfc}(\cdot) =1-\text{erf($\cdot$)}$ is the complement of the error function. Figures \ref{fig:different_cf} and \ref{fig:rayleigh} show the Slepian-Wolf region for these connection functions.

\begin{figure}
    \centering
    \includegraphics[width=0.9\linewidth]{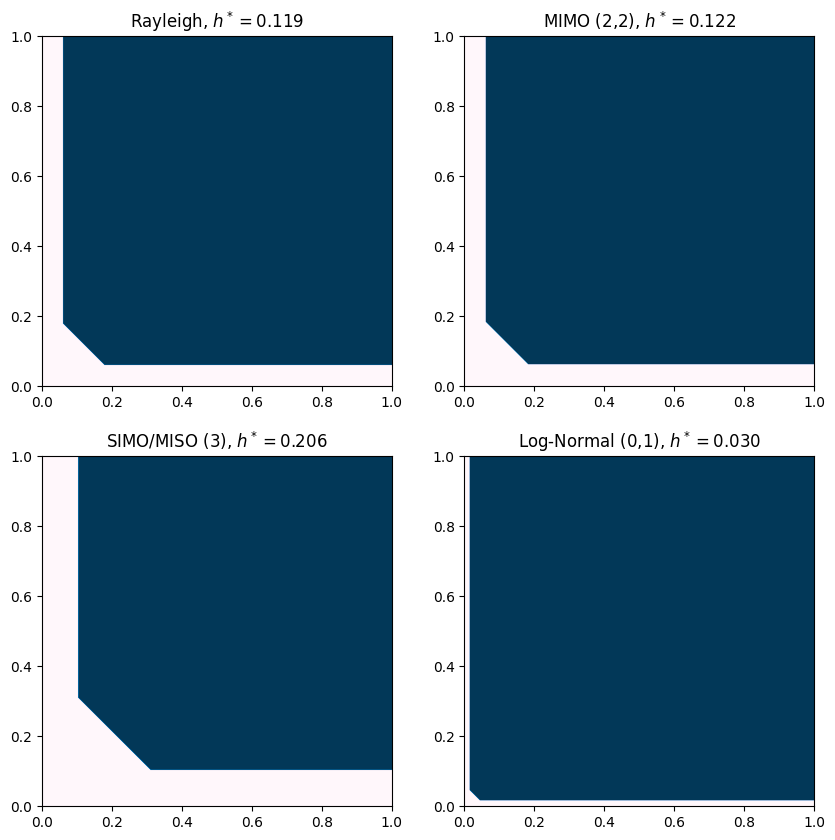}
    \caption{The Slepian-Wolf region for two encoders, with different connection functions. Here we use $r_0=0.1, \eta=2$ for every connection function.}
    \label{fig:different_cf}
\end{figure}

\begin{figure}
    \centering
    \includegraphics[width=0.9\linewidth]{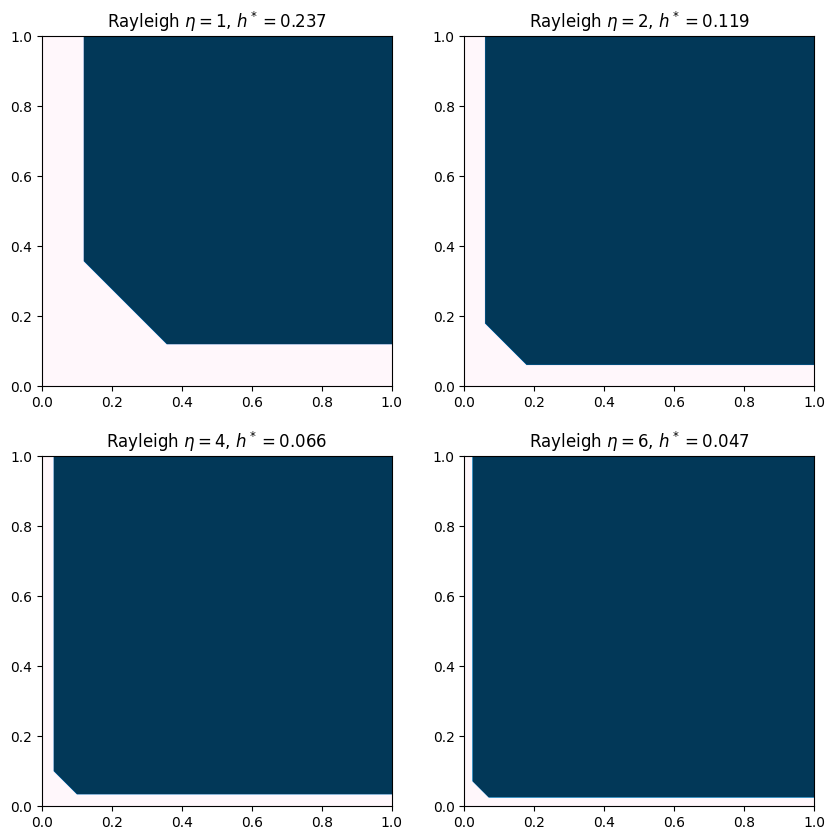}
    \caption{The Slepian-Wolf region for two encoders, using Rayleigh SISO, with $r_0=0.1$ fixed, and $\eta = 1,2,4,6$.}
    \label{fig:rayleigh}
\end{figure}

\section*{References}

\begingroup
    \renewcommand{\section}[2]{}

    \bibliographystyle{IEEEtran}
    \bibliography{supp}

\endgroup